\DeclareMathOperator*{\argmax}{arg\,max} 
\newtheorem{theorem}{Theorem} 
\newtheorem{definition}{Definition} 
\begin{document}

\title{Fairness Driven Multi-Agent Path Finding Problem}

\author{\authorname{Aditi Anand \sup{1}\orcidAuthor{0009-0001-7280-7335}, Dildar Ali \sup{1}\orcidAuthor{0000-0002-3427-1904} and Suman Banerjee \sup{1}\orcidAuthor{0000-0003-1761-5944}}
\affiliation{\sup{1} Department of Computer Science and Engineering, Indian Institute of Technology Jammu, Jammu, Jammu and Kashmir, 181221, India}
\email{\{2022ucs0076, 2021rcs2009, suman.banerjee\}@iitjammu.ac.in}
}

\keywords{Multi-Agent Systems, Path Finding, Fairness, Directed Acyclic Graph, Mechanism Design.}

\abstract{The Multi-Agent Path Finding (MAPF) problem aims at finding non-conflicting paths for multiple agents from their respective sources to destinations. This problem arises in multiple real-life situations, including robot motion planning and airspace assignment for unmanned aerial vehicle movement. The problem is computationally expensive, and adding to it, the agents are rational and can misreport their private information. In this paper, we study both variants of the problem under the realm of fairness. For the non-rational agents, we propose a heuristic solution for this problem. Considering the agents are rational, we develop a mechanism and demonstrate that it is dominant strategy incentive compatible and individually rational. We employ various solution methodologies to highlight the effectiveness and efficiency of the proposed solution approaches.}

\onecolumn \maketitle \normalsize \setcounter{footnote}{0} \vfill

\section{Introduction} \label{Sec:Intro}
The Multi-Agent Path Finding (MAPF) problem involves finding non-conflicting paths for each agent, which is particularly relevant to specific applications such as robotics and inventory management \cite{liu2024multi}, video games, vehicle/UAV routing \cite{ho2019multi}, and the pickup and delivery problem \cite{ma2017lifelong}, among others. In this problem, we are given a set of agents and an underlying graph where the start and destination vertices of every agent are marked. The problem is to find a path for every agent such that some conflicting constraints are satisfied. Such constraints will vary from application to application. In the case of warehouse robotics, the constraint may be that there should not be any obstacles in the proposed path of a robot. In the case of route planning for a fleet of UAVs, the constraint may be that two UAVs cannot move to each other at the same timestamp. Due to the large number of potential applications MAPF Problem has been studied extensively in different variants \cite{stern2019multi} for a large number of agents, in online set up where an agent can join at a later stage as well \cite{vsvancara2019online}, where the goal is to find out non-optimal path in polynomial time, considering multiple objectives simultaneously, considering the presence of uncertainty \cite{shahar2021safe}, considering delay probabilities \cite{ma2017multi}, combining learning-based techniques \cite{sartoretti2019primal}, and many more. A large number of solutions based on heuristic search \cite{okumura2023lacam} are used to solve this problem.  

Although several studies exist on the MAPF problem, to the best of our knowledge, the notion of fairness has not been considered. In this paper, we bridge this gap by studying the problem under different constraints. In particular, we make the following contributions in this paper:
\begin{itemize}
\item We study the MAPF problem under three fairness constraints, namely, $\epsilon$-envy freeness, max-min fairness, and proportional fairness.
\item  We have proposed solutions for non-rational agents and developed a dominant strategy incentive compatible and individually rational mechanism for rational agents.
\item We have conducted a large number of experiments using MAPF Benchmark datasets to demonstrate the effectiveness and efficiency of the proposed solution approaches.
\end{itemize}

The rest of this paper is organized as follows. Section~\ref{Sec:Background} presents the problem background. Section~\ref{Sec:Proposed_solution} describes the proposed solution methods. Section~\ref{Sec:mechanism_design} discusses the mechanism design for the MAPF problem. Section~\ref{Sec:Exp_Evualuation} reports the experimental evaluation and results. Finally, Section~\ref{sec:conclusion} concludes the paper and outlines future research directions.

\section{Background and Problem Set Up} \label{Sec:Background}
\paragraph{Basic Notations and Terminologies.} We denote a graph by $\mathcal{G}(\mathcal{V},\mathcal{E})$ where $\mathcal{V}$ and $\mathcal{E}$ denotes the vertex and edge set of $\mathcal{G}$, respectively. We denote the number of vertices and edges by $n$ and $m$, respectively. All the graphs considered in this paper are simple, undirected, and unweighted. For every vertex $v \in V$, the neighborhood of $v$ is denoted by $\mathcal{N}(v)$ and defined as the set of vertices with which $v$ has an edge in $G$. For any positive integer $\ell$, $[\ell]$ denotes the set $\{1,2, \ldots, \ell\}$.


\paragraph{Agents and Problem Instance.} We consider there are $\ell$ agents $\mathcal{A}=\{a_1, a_2, \ldots, a_{\ell}\}$ and each agent has its start and destination vertex. For any $a_i \in \mathcal{A}$, its start and destination vertex is denoted by $v_{s_i}$ and $v_{d_i}$, respectively. In our study, we consider that time is discretized into equal durations, called `time stamps'. At any particular time stamp, an agent can either stay at the same vertex or move to one of the neighboring vertices. For agent $a_i$, the cost for staying at the same vertex or moving to any neighboring vertex is $c^{step}_i$. Also, every agent $a_i$ has a value $u_i$ that the agent $a_i$ will get after reaching the destination vertex. All these information are accumulated as $[\tau_{i}=(a_i, v_{s_i}, v_{d_i}, u_i, c^{step}_i)]^{\ell}_{i=1}$ and is called agent `type'.

\paragraph{Welfare of an Agent.} From one time stamp to the subsequent time stamp, an agent can stay at the same vertex or move to a neighboring vertex. We assume that starting at time $t=0$, all agents are at some `garbage' vertex. Each agent is requesting a path from its initial to the destination vertex. Once an agent reaches its destination vertex, it disappears. Path of agent $a_i$  ($\pi_{i}$) is defined as a sequence of vertices with the respective timestamps. The set of vertices and edges in path $\pi_{i}$ are denoted by $\mathcal{V}(\pi_{i})$ and $\mathcal{E}(\pi_{i})$, respectively. The total cost of movement is $\mathcal{C}_{i}(\pi_i)=|\pi_i| . c_i^{step}$ and \emph{welfare} of the agent $a_i$ is defined as $W_{i}(\pi_{i})=u_i - \mathcal{C}_{i}(\pi_i)$, where $|\pi_{i}|$ denotes the length of the path. Next, we define social welfare.


\begin{definition}[Social Welfare]\label{Def:Social_Welfare}
Social welfare is defined as the sum of the welfare of all agents. 

\begin{equation}\label{Eq:Social_Welfare}
SW(\pi_1, \pi_2, \ldots, \pi_{\ell})=\underset{i \in [\ell] }{\sum} \ W_{i}(\pi_{i})
\end{equation}
\end{definition}
\paragraph{Conflict Constraints.} Motivated by numerous real-world applications, ranging from robot motion planning to airspace assignment for unmanned aerial vehicle (UAV) movement, several conflicting constraints must be maintained. In this study, we are interested in the following two types of constraints:
\begin{itemize}
\item \textbf{Vertex Conflicting Constraint:} A vertex conflict occurs when two agents are at the same vertex at the same time stamp. A solution $\Pi$ is vertex conflict free if for any pair of paths $\pi_i, \pi_j \in \Pi$, $\mathcal{V}(\pi_i) \cap \mathcal{V}(\pi_j)=\emptyset$.
\item \textbf{Edge Conflicting Constraint:} Edge conflict occurs when two agents want to traverse the same edge between the same consecutive time stamps. To ensure that the solution $\Pi$ is edge conflict-free, we need to verify for any $v_x \in \mathcal{V}$ and $t \in [T-1]$, if $(v_x,t), (v_y,t+1) \in \mathcal{V}(\pi_i)$ then $(v_y,t), (v_x,t+1) \notin \mathcal{V}(\pi_j)$ for any $i, j \in [l]$. 
\end{itemize}
Next, we define the feasible and optimal path assignments in Definitions \ref{Def:Feasible_path} and \ref{Def:Optimal_path}.
\begin{definition}[Feasible Path Assignment]\label{Def:Feasible_path}
A path assignment $\Pi=(\pi_1, \pi_2, \ldots, \pi_{\ell})$ is said to be feasible if it obeys the vertex and edge conflicting constraints. 
\end{definition}

\begin{definition}[Optimal Path Assignment]\label{Def:Optimal_path}
An assignment is said to be optimal if the aggregated welfare is maximized, as stated in Equation \ref{Eq:optimal}.
\begin{equation} \label{Eq:optimal}
(\pi_{1}^{*}, \pi_{2}^{*}, \ldots, \pi_{\ell}^{*}) \longleftarrow \underset{\Pi = (\pi_1, \pi_2, \ldots, \pi_{\ell})}{argmax} \ \underset{i \in [\ell]}{\sum} \ W_{i}(\pi_{i})
\end{equation}
\end{definition}

The optimal aggregated welfare is denoted by $d^{*}$, i.e., $d^{*}= \underset{i \in [\ell]}{\sum} W_{i}(\pi^{*}_{i})$. In existing studies, Friedrich et al. \cite{friedrich2024scalable} investigated a variant of the MAPF problem and proposed several mechanisms to address it. The solution methodologies proposed by them may lead to situations where there are huge discrepancies in individual welfare among the agents. This is one of the key limitations of their study, which we have addressed in this paper. 

\subsection{Fairness in MAPF Problem} As described in Section \ref{Sec:Intro}, ensuring fairness is very important in the MAPF Problem. Now, we describe the fairness notions considered in this paper.

\begin{definition}[$\epsilon$-Envy Freeness]\label{Def:Envyfreenes}
 A solution $\Pi = (\pi_1, \pi_2, \ldots, \pi_\ell)$ is $\epsilon$-envy free if no agent significantly envies the allocation of any other agent, up to a small threshold $\epsilon > 0$. Formally, for all pairs of agents $a_i, a_j \in \mathcal{A}$, the absolute difference in their welfare is bounded by $\epsilon$, i.e.,
 \begin{equation}
     |W_i(\pi_i) - W_j(\pi_j)| \leq \epsilon
 \end{equation}

\end{definition}

\begin{definition}[Max-min fairness]\label{Def:Max_Min}
Let $\mathcal{F}$ be the set of all feasible path assignments $\Pi' = (\pi'_1, \pi'_2, \ldots, \pi'_\ell)$. Then, max-min fairness wants a solution that maximizes the minimum welfare, that is, 
\begin{equation}
max_{x \in \mathcal{F}} min_{i}W_{i}(\pi'_i)
\end{equation}
\end{definition}

\begin{definition}[Proportional Fairness]\label{Def:PF}
A solution $\Pi = (\pi_1, \pi_2, \ldots, \pi_\ell)$ is said to be proportionally fair if for any other feasible solution $\Pi' = (\pi_1', \pi_2', \ldots, \pi_\ell')$, the following inequality holds:
\begin{equation}
\sum_{i=1}^{\ell} \frac{W_i(\pi_i') - W_i(\pi_i)}{W_i(\pi_i)} \leq 0 
\end{equation}
\end{definition}

\section{Proposed Solution Approaches}\label{Sec:Proposed_solution}
\subsection{Fair Increasing Cost Tree Search}
We propose an Iterative Deepening $A^*$ ($IDA^*$) based Fair Increasing Cost Tree Search (Fair-ICTS) algorithm to solve the MAPF problem. The proposed approach is presented in Algorithm \ref{Algo1:FairICTS}.
\begin{algorithm}[h!]
\scriptsize
\SetAlgoLined
\SetKwFunction{DepthBoundSearch}{DepthBoundSearch}
\SetKwFunction{IsProportionallyFair}{IsProportionallyFair}
\SetKwFunction{IsMaxMinFair}{IsMaxMinFair}
\caption{Fair Increasing Cost Tree Search}
\label{Algo1:FairICTS}

\KwIn{Graph $\mathcal{G} = (\mathcal{V}, \mathcal{E})$, 
Agents $\{ \tau_i = (a_i, v_{s_i}, v_{d_i}, u_i, c_i^{step}) \}_{i=1}^{\ell}$, 
Fairness tolerance $\epsilon$}
\KwOut{Fair and conflict-free optimal joint plan $\Pi^\ast$}

$\mathcal{R} \gets \emptyset$\;
Compute $\mathbf{s_{0}} = (s_{1}^*, \ldots, s_{\ell}^*)$ using $A^*$ algorithm\;
Root node $N_0 \gets (\textbf{s}_0, g(N_0)=0, h(N_0))$\;

$\text{depthBound} \gets f(N_0) = g(N_0) + h(N_0)$\;

\While{true}{
    $\text{minExceedingCost} \gets \infty$\;
    \textsc{DepthBoundSearch}(
    $N_0,$ 
    $\text{depthBound},$ 
    $\text{minExceedingCost},$ 
    $\mathcal{R},$ 
    $\epsilon$)\;

    \If{$\text{minExceedingCost} = \infty$}{
        \textbf{break}\;
    }
    \Else{
        $\text{depthBound} \gets \text{minExceedingCost}$\;
    }
}

$\mathcal{R} \gets 
    \{\Pi \in \mathcal{R} \mid$
        \IsProportionallyFair($\Pi, \mathcal{R}$)
        $\land$
        \IsMaxMinFair($\Pi, \mathcal{R}$)
    \}\;

\Return $\Pi^\ast \;=\; \arg\max_{\Pi \in \mathcal{R}} \sum_{i=1}^{\ell} W_i(\pi_i)$\;
\end{algorithm}

We represent the search space as a joint space of step vectors over all agents, denoted by $\mathbf{s} = (s_1, \dots, s_{\ell})$, where $s_i$ is the number of time steps allocated to agent $a_{i}$. The root node of the search tree, $\mathbf{s_{0}} = (s_{1}^*, \ldots, s_{\ell}^*)$, corresponds to the minimal step vector in which each $s_i^*$ denotes the shortest possible number of steps required for agent $a_{i}$ to reach its goal. These optimal step counts are computed using $A^{*}$ algorithm \cite{astar} for each agent individually, providing a lower bound on the required number of steps. 
The algorithm then explores this search space of step vectors, constructing and evaluating all possible per-agent paths of corresponding lengths for each vector. Each node $N$ in the search tree corresponds to a specific step vector $\mathbf{s} = (s_1, \dots, s_{\ell})$ and encapsulates feasible paths for each agent within the allocated steps. The objective is to maximize the social welfare, which is equivalent to minimizing the total cost across all agents, while ensuring fairness among them. 

\par We employ the $IDA^*$ framework as the underlying search mechanism, wherein each node N is evaluated using the function \(f(N) = g(N) + h(N)\), where \(g(N)\) denotes the cumulative cost associated with the current joint step allocation, and \(h(N)\) is an admissible heuristic denoting additional cost required to obtain a feasible joint plan. This formulation enables a memory-efficient exploration while preserving the optimality guarantees of the $IDA^*$ framework. At each iteration, all nodes whose $f(N)$ does not exceed the current depth bound are explored. The expansion of a node is performed using the procedure \textsc{ExpandNode} (in Algorithm \ref{Algo2:DepthBoundSearch}). A node $N = (s_1, s_2, \dots, s_\ell)$ has $\ell$ children corresponding to the incremental increase in one agent’s path length. The set of child nodes is given by $\{(s_1 + 1, s_2, \dots, s_\ell), (s_1, s_2 + 1, \dots, s_\ell), \dots, (s_1, s_2, \dots, s_\ell + 1)\}$, ensuring that in each expansion step, only one agent’s path length is extended by one step while others remain unchanged. 

\par During this exploration, if a node exceeds the current bound, its f(N) value is recorded as a candidate for the next bound. Among these, the bound is set to the minimum one, stored in minExceedingCost, and the algorithm proceeds to the next iteration with this new bound. At each depth level, the algorithm constructs the corresponding per-agent Directed Acyclic Graphs (DAGs) using Algorithm \ref{Algo4:DAG}, where each DAG $\mathcal{D}_i$ represents all feasible paths for agent $a_{i}$ of its currently allowed number of steps. These individual DAGs compactly store multiple path alternatives. Subsequently, a joint DAG $\mathcal{D} = \mathcal{D}_1 \times \ldots \times \mathcal{D}_\ell$ is constructed to represent all possible joint plans. To ensure feasibility, the joint DAG is pruned using the \textsc{PruneConflicts} procedure (in Algorithm \ref{Algo4:DAG}) and remaining plans are evaluated for fairness. All feasible and envy-free plans are accumulated in the set $\mathcal{R}$ over successive iterations. After the search terminates, fairness constraints, proportional fairness, and max–min fairness are applied to $\mathcal{R}$ to obtain the final set of fair and conflict-free joint plans. 


\begin{algorithm}[h!]
\scriptsize
\SetAlgoLined
\caption{Depth-Bounded Search Procedure}
\label{Algo2:DepthBoundSearch}

\SetKwFunction{DepthBoundSearch}{DepthBoundSearch}
\SetKwProg{Fn}{Function}{:}{end}

\Fn{\DepthBoundSearch{$N$, depthBound, minExceedingCost, $\mathcal{R}$, $\epsilon$}}{

$f(N) \gets g(N) + h(N)$\;
\If{$f(N) >$ depthBound}{
    minExceedingCost $\gets \min($minExceedingCost$, f(N))$\;
    \Return \textbf{false}\;
}

Construct individual DAG $\mathcal{D}_i$ for each agent $i$ using Algorithm \ref{Algo4:DAG} corresponding to the number of steps in node $N$\;
Construct joint DAG $\mathcal{D} \gets \mathcal{D}_1 \times \cdots \times \mathcal{D}_\ell$\;
$\mathcal{D} \gets$ \textsc{PruneConflicts}($\mathcal{D}$)\;

\For{each joint plan $\Pi' = (\pi_1', \dots, \pi_\ell')$ in $\mathcal{D}$}{
    $C_i \gets |\pi_i'| \cdot c_i^{step}$\ {,} $W_i \gets u_i - C_i$\;
    \If{\textsc{IsEnvyFree}($\{W_i\}, \epsilon$)}{
        $\mathcal{R} \gets \mathcal{R} \cup \{\Pi'\}$\;
    }
}

\If{$\mathcal{R} \neq \phi$}{
    \Return \textbf{true}\;
}

found $\gets$ \textbf{false}\;
\For{each child node $N'$ in \textsc{ExpandNode}($N$)}{
    $g(N') \gets g(N) + \sum_i (s_i' - s_i) \cdot c_i^{step}$\;
    $h(N') \gets$ \textsc{HeuristicEstimate}($N'$)\;
    \If{\textsc{DepthBoundSearch}($N'$, depthBound, minExceedingCost, $\mathcal{R}$)}{
        found $\gets$ \textbf{true}\;
    }
}
\Return found\;}
\end{algorithm}

\begin{algorithm}[t]
\scriptsize
\SetAlgoLined
\caption{Fairness Filtering Subroutines}
\label{Algo3:FairICTS_Fairness}

\SetKwFunction{IsEnvyFree}{IsEnvyFree}
\SetKwFunction{IsMaxMinFair}{IsMaxMinFair}

\SetKwProg{Fn}{Function}{:}{end}

\Fn{\IsEnvyFree{$\{W_i\}, \epsilon$}}{
    \For{each pair $(i, j)$, $i \neq j$}{
        \If{$|W_i - W_j| \ge \epsilon$}{
            \Return \textbf{false}\;
        }
    }
    \Return \textbf{true}\;
}

\vspace{0.5em}

\Fn{\IsMaxMinFair{$\Pi, \mathcal{R}$}}{
    $\mathbf{W} \gets$ sort$\big([W_1(\pi_1), W_2(\pi_2), \dots, W_\ell(\pi_\ell)]\big)$ 
    
    \For{each agent $i = 1, 2, \dots, \ell$}{
        \For{each feasible solution $\Pi' \in \mathcal{R}$}{
            \If{$W_i(\pi'_i) > W_i(\pi_i)$ \textbf{and} $W_j(\pi'_j) \ge W_j(\pi_j),\ \forall j < i$}{
                \Return \textbf{false} \tcp*{Fairness violated}
            }
        }
    }
    \Return \textbf{true}\;
}
\vspace{0.5em}
\Fn{\IsProportionallyFair{$\Pi, \mathcal{R}$}}{
    \For{each $\Pi' \in \mathcal{R}$}{
        \If{$\displaystyle \sum_{i=1}^\ell 
            \frac{\mathcal{W}_i(\pi_i') - \mathcal{W}_i(\pi_i)}{\mathcal{W}_i(\pi_i)} > 0$}{
            \Return \textbf{false}\;
        }
    }
    \Return \textbf{true}\;
}
\end{algorithm}

\begin{algorithm}[h!]
\scriptsize
\SetAlgoLined
\caption{Build DAG for an Agent}
\label{Algo4:DAG}
\SetKwFunction{BuildDAG}{BuildDAG}
\SetKwFunction{PruneConflicts}{PruneConflicts}
\SetKwProg{Fn}{Function}{:}{end}

\Fn{\BuildDAG{$\mathcal{G}(\mathcal{V},\mathcal{E}), \tau_i=(a_i, v_{s_i}, v_{d_i}, u_i, c_i^{step}), s_i$}}{
    Build root node $(v_{s_i}, 0)$\;
    \For{each node $(v, t)$ in $DAG_i$ in BFS manner}{
        \If{$t < s_i$}{
            \For{$v' \in N(v) \cup \{v\}$}{
                Add edge $(v, t) \rightarrow (v', t+1)$\;
            }
        }
    }
    Prune leaf nodes $(v, t)$ if $t \neq s_i$ or $v \neq v_{d_i}$\;
    \Return $DAG_i$\;
}

\vspace{0.5em}

\Fn{\PruneConflicts{$\mathcal{D}$}}{

    \For{each node $n = (v_{a_1}, \dots, v_{a_\ell}, t)$ in $\mathcal{D}$}{
        \If{$\exists\, i \neq j$ such that $v_{a_i} = v_{a_j}$}{
            Remove $n$ from $\mathcal{D}$ \tcp*{Vertex conflict}
        }
    }

    \For{each edge $e = (u, v)$ in $\mathcal{D}$}{
        $u = (v_{a_1}, \dots, v_{a_\ell}, t)$; \quad $v = (v_{a_1}', \dots, v_{a_\ell}', t+1)$\;
        \For{$i \neq j \in \{1, \dots, \ell\}$}{
            \If{$v_{a_i} = v_{a_j}$ \textbf{and} $v_{a_i}' = v_{a_j}'$}{
                Remove $e$ \tcp*{Edge Conflict}
            }
            \ElseIf{$v_{a_i} = v_{a_j}'$ \textbf{and} $v_{a_j} = v_{a_i}'$}{
                Remove $e$ \tcp*{Swap conflict}
            }
        }
    }
    \Return $\mathcal{D}$\;
}
\end{algorithm}

\begin{figure}[h!]
    \centering
    \includegraphics[width=0.2\columnwidth]{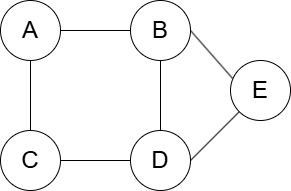}
    \includegraphics[width=0.2\columnwidth]{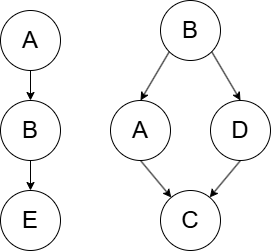}
    \includegraphics[width=0.2\columnwidth]{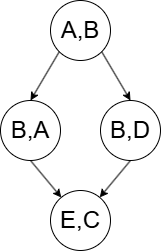}
    \caption{
    \scriptsize
    (a) Environment graph $\mathcal{G}$, 
             (b) individual agent DAGs $\mathcal{D}_1$ and $\mathcal{D}_2$, and 
             (c) their joint DAG $\mathcal{D} = \mathcal{D}_1 \times \mathcal{D}_2$, 
             with step bound 2.}
    \label{fig:example_dag}
\end{figure}

\subsection{Fair Conflict-Based Search}
The Fair Conflict-Based Search (Fair-CBS) algorithm extends the standard CBS framework to ensure fairness. The algorithm begins by computing independent shortest paths for all agents using $A^*$, forming the root node of the search tree. Each node contains a joint plan, associated constraints, and a sum of individual path costs (SIC) representing the total cost. At each iteration, the algorithm expands the node with the lowest cost from the open list. If the current solution is conflict-free, it is tested for envy-freeness. Envy-free solutions are stored in $\mathcal{R}$. The search continues until all nodes have been expanded, ensuring that every feasible leaf node is evaluated and stored if it is a valid solution. When a conflict is detected between two agents $(a_i, a_j)$ at location $v$ and time $t$, two child nodes are generated by imposing mutually exclusive constraints that prevent one of the agents from occupying $(v,t)$. For each child, the affected agent’s path is replanned via $A^*$ considering the updated constraints, and the resulting node is added to the search frontier. After the search tree is fully expanded, the set $\mathcal{R}$ is further filtered to retain only those joint plans that satisfy proportional fairness and max–min fairness. Finally, the algorithm returns the welfare-maximizing plan.

\begin{algorithm}[h!]
\scriptsize
\SetAlgoLined
\SetKwFunction{IsProportionallyFair}{IsProportionallyFair}
\SetKwFunction{IsMaxMinFair}{IsMaxMinFair}
\SetKwFunction{IsEnvyFree}{IsEnvyFree}
\caption{Fair Conflict-Based Search}
\label{Algo:FairCBS}

\KwIn{Graph $\mathcal{G} = (\mathcal{V}, \mathcal{E})$, 
Agents $\{ \tau_i = (a_i, v_{s_i}, v_{d_i}, u_i, c_i^{step}) \}_{i=1}^{\ell}$, 
Fairness tolerance $\epsilon$}
\KwOut{Fair and conflict-free joint plan $\Pi^\ast$}

$\mathcal{R} \gets \phi$ 

Compute individual paths $\pi_i \gets$ \textsc{$A^*$} $(v_{s_i}, v_{d_i})$ for all $i$, $Root.\text{solution} = (\pi_1, \dots, \pi_\ell)$\;
$Root.\text{cost} \gets \text{SIC}(Root.\text{solution})$\; $Root.\text{constraints} \gets \phi$\;
Insert $Root$ into \textsc{OPEN}\;

\While{\textsc{OPEN} is not empty}{
    $P \gets$ remove a node from \textsc{OPEN}\;
    $conflict \gets$ \textsc{FindConflict}$(P.\text{solution})$\;

    \If{$conflict = \phi$}{
        \If{\textsc{IsEnvyFree}$(\{W_i\} \text{ from } P.\text{solution}, \epsilon)$}{
            $\mathcal{R} \gets \mathcal{R} \cup \{P.\text{solution}\}$\;
        }
        \textbf{continue}\;
    }

    Let $conflict = (a_i, a_j, v, t)$\;
    \For{$a \in \{a_i, a_j\}$}{
        Create child node $A$ with constraints $A.\text{constraints} = P.\text{constraints} \cup \{(a, v, t)\}$\;
        Replan $\pi_a$ using \textsc{$A^*$}$(v_{s_a}, v_{d_a}, A.\text{constraints})$\;
        \If{$\pi_a \neq \phi$}{
            Update $A.\text{solution}$ and $A.\text{cost} = \text{SIC}(A.\text{solution})$\;
            Insert $A$ into \textsc{OPEN}\;
        }
    }
}
$\mathcal{R} \gets \{\Pi \in \mathcal{R} \mid$ \IsProportionallyFair{$(\Pi, \mathcal{R})$} $\land$ \IsMaxMinFair{$(\Pi, \mathcal{R})$}$\}$\;
\Return $\Pi^\ast = \arg\max_{\Pi \in \mathcal{R}} \sum_i W_i(\pi_i)$\;

\end{algorithm}

\subsection{Fairness Verification}
Our approach integrates fairness-aware evaluation directly into the search. Max-min fairness is enforced by ordering agents in non-decreasing welfare and discarding any plan for which another feasible plan can strictly improve the welfare of some agent $i$ without improving the welfare of agents $1,\dots,i-1$. Next, proportional fairness is checked to ensure that no alternative plan can increase total welfare proportionally without inducing a larger proportional loss to others. Only fair plans are retained in $\mathcal{R}$, and the final solution is selected as $\Pi^* = \arg\max_{\Pi \in \mathcal{R}} \sum_{i=1}^{\ell} W_i(\pi_i)$.

Next, we establish the theoretical guarantees of the proposed Fair-ICTS algorithm.

\begin{theorem}
Let $\mathcal{F}_{fair}$ denote the set of all fair and feasible joint plans. If there exists at least one feasible MAPF solution, then the Fair-ICTS algorithm
terminates and returns an optimal fair joint plan $\Pi^\ast = \arg\max_{\Pi \in \mathcal{F}_{fair}} \sum_{i=1}^{\ell} W_i(\pi_i)$.

\end{theorem}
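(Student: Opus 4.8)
The plan is to establish the theorem in three logically separate pieces: (i) \textbf{termination}, (ii) \textbf{completeness}, i.e.\ that every fair feasible joint plan is eventually examined and collected, and (iii) \textbf{correctness of the final selection}, i.e.\ that the returned plan is indeed the welfare-maximizer over $\mathcal{F}_{\mathrm{fair}}$. I would take these in order, since (iii) rests on (ii), and (ii) rests on (i) together with the optimality properties of the underlying $IDA^*$ search.

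\emph{Termination.} First I would argue that the joint step-vector search tree is explored in an $IDA^*$ fashion with an admissible heuristic $h$, so $f(N)=g(N)+h(N)$ never overestimates the cost of the cheapest feasible joint plan reachable by extending $N$. The key quantity is the optimal aggregated cost (equivalently $d^*$ via the welfare identity $W_i=u_i-\mathcal{C}_i$): because a feasible MAPF solution is assumed to exist, there is a finite step vector $\mathbf{s}^{\mathrm{fin}}$ realizing it, and the corresponding cost $g$ value is finite. Each outer \texttt{while} iteration strictly increases \texttt{depthBound} to the recorded \texttt{minExceedingCost}; since step costs $c_i^{\mathrm{step}}$ are positive and the number of step vectors with total cost below any fixed bound is finite, only finitely many iterations occur before \texttt{depthBound} reaches that finite optimal-cost value, at which point a feasible (indeed fair, once the filters are applied) plan is found, $\mathcal{R}\neq\emptyset$, and the loop breaks. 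I would also note each \textsc{DepthBoundSearch} call terminates because it recurses only on children with strictly larger $g$, all bounded by \texttt{depthBound}, and the per-node DAG construction (Algorithm~\ref{Algo4:DAG}) and conflict pruning are finite.

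\emph{Completeness and the fairness filters.} Next I would show that by the time the outer loop exits with \texttt{depthBound} equal to the minimum cost of any \emph{envy-free feasible} plan, the set $\mathcal{R}$ contains \emph{every} envy-free feasible joint plan of that minimal cost — this follows because at each depth level the algorithm builds the full per-agent DAGs, forms the joint DAG $\mathcal{D}_1\times\cdots\times\mathcal{D}_\ell$, prunes exactly the vertex/edge/swap conflicts (so the surviving plans are precisely the feasible ones of those step lengths), and adds all $\epsilon$-envy-free survivors to $\mathcal{R}$. Then I would analyze the post-processing line $\mathcal{R}\gets\{\Pi\in\mathcal{R}\mid \textsc{IsProportionallyFair}(\Pi,\mathcal{R})\wedge\textsc{IsMaxMinFair}(\Pi,\mathcal{R})\}$: by Definitions~\ref{Def:Max_Min} and~\ref{Def:PF}, the retained plans are exactly those in $\mathcal{R}$ that are max-min and proportionally fair relative to the candidate pool, so after filtering $\mathcal{R}=\mathcal{F}_{\mathrm{fair}}$ restricted to the explored cost level. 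Finally the returned $\Pi^\ast=\arg\max_{\Pi\in\mathcal{R}}\sum_i W_i(\pi_i)$ is the welfare maximizer over this set; since higher aggregate welfare means lower aggregate cost, no fair feasible plan outside the explored level can beat it, giving $\Pi^\ast=\arg\max_{\Pi\in\mathcal{F}_{\mathrm{fair}}}\sum_i W_i(\pi_i)$.

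\emph{Main obstacle.} The delicate point — and the step I expect to require the most care — is reconciling the \emph{early termination} on the first nonempty $\mathcal{R}$ (the algorithm breaks as soon as some envy-free feasible plan appears at the current bound) with the claim that $\mathcal{R}$ ultimately contains \emph{all} optimal fair plans and in particular the global welfare maximizer. One must argue that the $IDA^*$ cost ordering guarantees the first bound at which an envy-free feasible plan exists is also the bound at which the maximum-welfare fair plan exists (equivalently, that fairness filtering never discards a plan that would have been optimal, because max-min/proportional fairness here are defined relative to $\mathcal{R}$ and $\mathcal{R}$ at that level already contains all feasible envy-free plans of minimum cost, so the comparisons are against the right reference set). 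If the intended reading is instead that the search does not break early but exhausts all bounds up to $d^*$, I would state that reading explicitly and adjust the termination argument accordingly; either way, pinning down precisely which plans populate $\mathcal{R}$ at break time is where the proof's weight lies.
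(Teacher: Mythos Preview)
Your three-part decomposition (termination, completeness, correctness of the final selection) is essentially the paper's approach, and your individual arguments are sound. The paper's own proof is considerably briefer and less careful: it simply asserts that by the monotone-bound property of $IDA^*$ ``all nodes with finite cost are eventually explored,'' hence \emph{every} feasible plan is generated at some iteration, and therefore after the fairness filters the retained set equals $\mathcal{F}_{\mathrm{fair}}$ exactly --- not merely $\mathcal{F}_{\mathrm{fair}}$ restricted to a single cost level. In other words, the paper implicitly adopts the second of the two readings you describe and does not engage with the early-termination issue at all.

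One small correction to your reading of the control flow: the outer \texttt{while} loop in Algorithm~\ref{Algo1:FairICTS} does not break when $\mathcal{R}\neq\emptyset$; it breaks only when \texttt{minExceedingCost}$=\infty$. The early \textbf{return true} on $\mathcal{R}\neq\emptyset$ lives inside \textsc{DepthBoundSearch} and short-circuits the recursion on that node's children, but its return value is never inspected by the outer loop. Under the paper's intended semantics the bounds keep increasing and $\mathcal{R}$ accumulates envy-free feasible plans across iterations, so you may drop the ``restricted to the explored cost level'' hedge and argue that the post-filter $\mathcal{R}$ equals $\mathcal{F}_{\mathrm{fair}}$ directly. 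The obstacle you flag is a legitimate concern about the pseudocode as written, but the paper's proof simply sidesteps it by asserting full enumeration.
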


\begin{proof}
Fair-ICTS performs a depth-bound search, where the search bound is iteratively increased. At each iteration, it explores all nodes $N$ with $f(N) \leq \text{depthBound}$. For a given step vector, the corresponding DAGs are finite due to finite map and time horizon. Hence, the joint DAG enumerates all feasible joint plans for that bound. The pruning procedure removes only nodes and edges that violate conflict constraints. Therefore, if a valid solution $\Pi$ exists, there is a finite bound $B^\ast$ exceeding its cost, and algorithm will eventually generate $\Pi$ once the bound reaches $B^\ast$. This establishes completeness. By the properties of $IDA^*$, search bounds are increased monotonically and all nodes with finite cost are eventually explored. Consequently, every feasible plan is generated at some iteration. After filtering by fairness constraints, the remaining set of plans is exactly $\mathcal{F}_{fair}$. Finally, Fair-ICTS selects and returns the welfare maximizing plan, which establishes optimality.
\end{proof}

\begin{theorem} The time and space requirements of Fair-ICTS are $\mathcal{O}(B \cdot b^{s_{max}+\ell} \cdot \ell^2\;+\;B \cdot P_{\text{cf}}^2 \cdot \ell
\;+m+nlog(n))$ and $\mathcal{O}(b^{\ell + s_{\max}})$ respectively, where $B$, b, $s_{max}$, $P_{cf}$ denote number of $IDA^*$ iterations, maximum branching factor of DAG, maximum number of steps and number of conflict-free joint plans at any bound.
\end{theorem}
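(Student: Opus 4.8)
The plan is to bound the work done in a single $IDA^*$ iteration and then multiply by the number of iterations $B$. Within one iteration at a fixed depth bound, the algorithm (i) constructs the per-agent DAGs, (ii) forms the joint DAG and prunes conflicts, and (iii) iterates over the surviving joint plans to test envy-freeness. I would analyze each of these in turn. First, I would argue that each per-agent DAG $\mathcal{D}_i$ has at most $\mathcal{O}(b^{s_i})$ nodes, since from the root $(v_{s_i},0)$ the BFS in Algorithm~\ref{Algo4:DAG} branches by at most $b = \max_v(|\mathcal{N}(v)|+1)$ at each of at most $s_i \le s_{\max}$ time layers; the joint DAG $\mathcal{D} = \mathcal{D}_1 \times \cdots \times \mathcal{D}_\ell$ therefore has at most $\mathcal{O}(b^{s_{\max}})$ nodes per layer after accounting for the $\ell$-fold product and the step-vector expansion, giving the $b^{s_{\max}+\ell}$ factor. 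The $\ell^2$ factor comes from \textsc{PruneConflicts}, which at each node and edge checks all $\binom{\ell}{2} = \mathcal{O}(\ell^2)$ agent pairs for vertex, edge, and swap conflicts.

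Next I would account for the envy-freeness loop: there are at most $P_{\text{cf}}$ conflict-free joint plans surviving pruning at any bound, and for each, \textsc{IsEnvyFree} runs over $\mathcal{O}(\ell^2)$ agent pairs — but the dominant cost in the theorem's second term is $P_{\text{cf}}^2 \cdot \ell$, which I would attribute to the post-search fairness filtering (\textsc{IsProportionallyFair} and \textsc{IsMaxMinFair}), where each of the $\le P_{\text{cf}}$ accumulated plans in $\mathcal{R}$ is compared against all other plans in $\mathcal{R}$ — an $\mathcal{O}(P_{\text{cf}}^2)$ sweep — with each comparison costing $\mathcal{O}(\ell)$ to evaluate the proportional-fairness sum or the max-min dominance condition. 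The $m + n\log n$ term is a one-time additive cost for the initial per-agent $A^*$ shortest-path computations that seed the root step vector $\mathbf{s}_0$. Multiplying the per-iteration DAG and pruning costs by the $B$ iterations and adding the (iteration-independent) filtering and $A^*$ costs yields the claimed time bound. For space, I would observe that $IDA^*$ keeps only the current search path in memory, so the dominant storage is a single joint DAG, which has $\mathcal{O}(b^{\ell + s_{\max}})$ nodes; the set $\mathcal{R}$ is bounded by the same quantity since it holds conflict-free plans drawn from these DAGs.

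The main obstacle I anticipate is making the branching-factor bookkeeping honest. The step-vector search tree has branching factor $\ell$ (one child per agent whose path is lengthened), and its depth until a feasible bound is reached can itself be $\Theta(s_{\max})$, which is where the extra $b^\ell$ or the separation between $b^{s_{\max}}$ and $b^{s_{\max}+\ell}$ must come from; I would need to state precisely whether $s_{\max}$ counts steps from the root or total path length, and whether the $b^{s_{\max}+\ell}$ in the time bound versus $b^{\ell+s_{\max}}$ in the space bound are meant to be the same quantity (they appear to be). A second delicate point is that $P_{\text{cf}}$ is defined as the number of conflict-free joint plans "at any bound," so I must be careful that $|\mathcal{R}|$ at filtering time is genuinely $\mathcal{O}(P_{\text{cf}})$ rather than a sum over all bounds — I would either argue the early-termination check ``\textbf{if} $\mathcal{R} \neq \phi$ \textbf{return true}'' limits accumulation, or absorb the bound-count into $B$. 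Modulo these definitional clarifications, the argument is a routine layer-by-layer counting exercise with no deep content.
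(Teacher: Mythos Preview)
The paper states this theorem without proof, so there is no reference argument to compare your proposal against. Your decomposition is the natural one and lines up with how the bound is evidently intended to be read: the $\ell^2$ factor from pairwise conflict checking in \textsc{PruneConflicts}, the $P_{\text{cf}}^2 \cdot \ell$ factor from the $|\mathcal{R}|$-vs-$|\mathcal{R}|$ comparisons inside \textsc{IsProportionallyFair} and \textsc{IsMaxMinFair} (each comparison summing over $\ell$ agents), and the additive $m + n\log n$ from the root-node $A^*$ calls.

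The soft spots you flag are real and are not failures of your argument so much as looseness in the stated bound. First, $b^{s_{\max}+\ell}$ does not fall out cleanly from the joint-DAG construction: a single per-agent DAG has $\mathcal{O}(b^{s_{\max}})$ nodes, but the $\ell$-fold product at synchronized time $t$ has up to $\prod_i b^{t}$ configurations, which gives $\mathcal{O}(b^{\ell \cdot s_{\max}})$ rather than $\mathcal{O}(b^{\ell + s_{\max}})$; your attempt to recover the additive exponent via the step-vector tree does not close the gap, since that tree has branching factor $\ell$, not $b$. Second, the $B$ multiplier on $P_{\text{cf}}^2 \cdot \ell$ is indeed awkward because the fairness filtering in Algorithm~\ref{Algo1:FairICTS} runs once after the \texttt{while} loop; if $|\mathcal{R}|$ accumulates across iterations then one would expect $|\mathcal{R}| \le B \cdot P_{\text{cf}}$ and hence a $B^2$ factor, whereas if the early return in Algorithm~\ref{Algo2:DepthBoundSearch} caps $|\mathcal{R}|$ at a single iteration's worth then no $B$ should appear. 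Your proposal is as precise as the theorem statement allows; a fully rigorous proof would require the paper to pin down what $b$ and $P_{\text{cf}}$ are measuring.
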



\begin{theorem}
Fair-CBS finds at least one $\Pi\in\mathcal{F}_{fair}$, if $\mathcal{F}_{fair}\neq\phi$ and returns plan with maximimum social welfare over $\mathcal{F}_{fair}$.
\end{theorem}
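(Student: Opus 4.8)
The plan is to mirror the structure of the Fair-ICTS completeness/optimality proof, adapting it to the fact that Fair-CBS explores a conflict-resolution tree rather than an increasing-cost tree of step vectors. The argument has three parts: (i) Fair-CBS is complete in the sense that it examines every feasible joint plan that is relevant; (ii) the fairness filtering applied to the accumulated set $\mathcal{R}$ leaves exactly the fair feasible plans; and (iii) the final $\arg\max$ step returns the welfare-maximizing element of $\mathcal{F}_{fair}$.

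**First I would** argue termination and exhaustiveness of the CBS search tree. Each child node adds one new constraint of the form $(a,v,t)$ to a parent's constraint set, and on a finite graph with a bounded time horizon $T$ there are only finitely many such spatiotemporal constraints; hence every root-to-leaf path in the constraint tree has bounded length and the tree is finite, so the \textsc{while} loop over \textsc{OPEN} terminates. Exhaustiveness is the standard CBS soundness/completeness property: whenever a conflict $(a_i,a_j,v,t)$ is found, the two children split on the mutually exclusive possibilities ``$a_i$ avoids $(v,t)$'' and ``$a_j$ avoids $(v,t)$'', and every conflict-free joint plan satisfies exactly one of these in each split, so no conflict-free plan is pruned away by the branching. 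Combined with the fact that $A^*$ returns an optimal path consistent with a node's constraints (and $\phi$ only when none exists), this shows that for every feasible (conflict-free) joint plan there is a leaf node whose solution is conflict-free, and every such conflict-free leaf solution is tested for envy-freeness and inserted into $\mathcal{R}$ if it passes. So at the end of the loop, $\mathcal{R}$ contains every $\epsilon$-envy-free feasible joint plan reachable by the search — in particular, if $\mathcal{F}_{fair}\neq\phi$, then $\mathcal{R}\neq\phi$, since any fair plan is in particular envy-free and feasible.

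**Next I would** handle the post-processing. The filtering step keeps $\Pi\in\mathcal{R}$ iff \textsc{IsProportionallyFair}$(\Pi,\mathcal{R})$ and \textsc{IsMaxMinFair}$(\Pi,\mathcal{R})$ both hold, which by Definitions~\ref{Def:Max_Min} and~\ref{Def:PF} (checked against the full accumulated $\mathcal{R}$, which by the previous paragraph contains all relevant feasible plans) leaves precisely $\mathcal{F}_{fair}\cap\mathcal{R}$; invoking exhaustiveness again, this is exactly $\mathcal{F}_{fair}$. Finally, since $\mathcal{F}_{fair}$ is finite and nonempty, $\arg\max_{\Pi\in\mathcal{R}}\sum_i W_i(\pi_i)$ is well-defined and returns a plan maximizing social welfare over $\mathcal{F}_{fair}$, establishing both claims of the theorem.

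**The main obstacle** I anticipate is the exhaustiveness claim once envy-free pruning is interleaved with the search: Fair-CBS inserts a conflict-free node's solution into $\mathcal{R}$ and then \texttt{continue}s without expanding it further, but this is harmless because once a node is conflict-free there is nothing left to resolve — the subtlety is instead whether \emph{every} fair feasible plan actually appears as some leaf's $A^*$ output, given that $A^*$ returns only one optimal path per agent per constraint set rather than enumerating all consistent paths. I would need to argue that it suffices to reach, for each target joint plan $\Pi\in\mathcal{F}_{fair}$, \emph{some} node whose constraint set is tight enough that the per-agent $A^*$ outputs coincide with $\Pi$ (or at least that some conflict-free plan of equal or greater social welfare and still fair is found), which is the same delicate point that underlies the usual CBS optimality proof; if the intended reading is that the DAG-style full enumeration of Algorithm~\ref{Algo4:DAG} is reused here, I would instead cite that enumeration directly. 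I would flag this as the step requiring the most care and otherwise keep the write-up short, leaning on the already-established Fair-ICTS argument for the finiteness and fairness-filtering parts.
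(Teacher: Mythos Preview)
Your proposal takes essentially the same approach as the paper's proof: exhaustive generation of the CBS tree so that every feasible plan appears at some leaf, accumulation into $\mathcal{R}$, then fairness filtering and the final $\arg\max$. The paper's version is far terser (three sentences) and simply \emph{asserts} the exhaustiveness claim you carefully unpack and flag as delicate; your identification of the $A^*$-returns-only-one-path issue is a genuine subtlety the paper does not address.
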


\begin{proof}
Since Fair-CBS exhaustively generates the CBS search tree where every feasible plan corresponds to at least one leaf. Plans corresponding to leaf nodes are checked for fairness constraints and saved in the set $\mathcal{R}$. Thus $\mathcal{F}_{fair} = \mathcal{R}$ and welfare maximizing plan is returned.
\end{proof}

\begin{theorem} The time and space requirements of Fair-CBS are $\mathcal{O}(N_{\text{CT}}\big(m + n\log n + \ell^2\big) \;+\; N_{\text{CT}}^2\cdot \ell^2)$ and $\mathcal{O}(N_{\text{CT}}\cdot \ell\cdot T_{\max})$ respectively, where $N_{CT}$ and $T_{max}$ are number of nodes explored and space to store an agent's path repectively.
\end{theorem}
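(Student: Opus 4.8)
The plan is to charge the running time to the $N_{\text{CT}}$ iterations of the main \textsc{while} loop (one iteration per constraint-tree node popped from \textsc{OPEN}) together with the post-processing fairness filter, and to charge the space to the nodes that are simultaneously alive in \textsc{OPEN}, in the constraint tree, and in $\mathcal{R}$. I would first fix this accounting scheme and then bound each ingredient separately.

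For the cost of one loop iteration on a popped node $P$: we call \textsc{FindConflict} on $P.\text{solution}$, which scans the $\ell$ agent paths and tests them pairwise; maintaining a per-timestamp occupancy table makes this $O(\ell^2)$ (the per-path length being absorbed, since each pair is resolved at a bounded number of timestamps via hashing). If $P$ is conflict-free we also run \textsc{IsEnvyFree}, which inspects all $\binom{\ell}{2}$ welfare pairs, again $O(\ell^2)$. If a conflict is present we spawn the two children, each needing a single $A^\ast$ replanning call under the augmented constraints; with a binary/Fibonacci heap this is $O(m + n\log n)$, and recomputing $\mathrm{SIC}$ ($O(\ell)$) and the heap insertion ($O(\log N_{\text{CT}})$) are dominated by it. Hence each iteration costs $O(m + n\log n + \ell^2)$, and over all $N_{\text{CT}}$ nodes this contributes $O\!\big(N_{\text{CT}}(m + n\log n + \ell^2)\big)$; the root construction is $\ell$ independent $A^\ast$ calls, i.e. $O(\ell(m+n\log n))$, which folds into this term.

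Next I would bound the fairness filter. Every plan placed in $\mathcal{R}$ comes from a distinct conflict-free node examined by the loop, so $|\mathcal{R}| \le N_{\text{CT}}$. The filter evaluates \textsc{IsProportionallyFair}$(\Pi,\mathcal{R})$ and \textsc{IsMaxMinFair}$(\Pi,\mathcal{R})$ for each $\Pi \in \mathcal{R}$: the proportional test compares $\Pi$ against every $\Pi' \in \mathcal{R}$ via an $O(\ell)$ summation, i.e. $O(|\mathcal{R}|\,\ell)$ per $\Pi$; the max-min test ranges over the $\ell$ ordered agent positions and, for each, scans all $\Pi'\in\mathcal{R}$ checking an $O(\ell)$ dominance condition, i.e. $O(|\mathcal{R}|\,\ell^2)$ per $\Pi$. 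Summing over the $|\mathcal{R}| \le N_{\text{CT}}$ plans gives $O(N_{\text{CT}}^2\ell^2)$, which dominates the proportional-fairness contribution. Adding the loop cost and the filter cost yields the claimed time bound $O\!\big(N_{\text{CT}}(m+n\log n+\ell^2) + N_{\text{CT}}^2\ell^2\big)$.

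For space, \textsc{OPEN} and the explored constraint tree hold $O(N_{\text{CT}})$ nodes, and each node stores a joint plan of $\ell$ agent paths of length at most $T_{\max}$, i.e. $O(\ell\,T_{\max})$; the set $\mathcal{R}$ likewise holds $O(N_{\text{CT}})$ such plans, and a single $A^\ast$ invocation uses $O(n+m)$ auxiliary memory that is reused across replans and is dominated. Under the standard accounting the per-node constraint set is also dominated by the joint-plan storage, giving total space $O(N_{\text{CT}}\cdot \ell \cdot T_{\max})$. The two points I expect to require the most care are (i) justifying the $O(\ell^2)$ (rather than $O(\ell^2 T_{\max})$) bound for conflict detection and the envy check, which needs the per-timestamp occupancy/hashing argument, or else an explicit decision to absorb a $T_{\max}$ factor into the path-handling terms; and (ii) the inequality $|\mathcal{R}|\le N_{\text{CT}}$, which pins the filter's quadratic blow-up at $N_{\text{CT}}^2$ and not worse. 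The remainder is routine bookkeeping of heap operations and $A^\ast$ costs.
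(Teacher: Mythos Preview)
The paper states this theorem without proof (as it does for the companion complexity claim for Fair-ICTS), so there is no argument to compare against; your accounting scheme is precisely the natural one and correctly reproduces both bounds. Your decomposition into (i) $N_{\text{CT}}$ main-loop iterations at $O(m+n\log n+\ell^2)$ each and (ii) a post-hoc fairness filter at $O(|\mathcal{R}|^2\ell^2)$ with $|\mathcal{R}|\le N_{\text{CT}}$ is exactly what the stated bound encodes, and the space argument via $N_{\text{CT}}$ stored joint plans of size $\ell\,T_{\max}$ is likewise the intended reading.

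The one caveat you already flag is the right one: a literal reading of \textsc{FindConflict} on $\ell$ paths of length up to $T_{\max}$ gives $O(\ell\,T_{\max})$ or $O(\ell^2 T_{\max})$ rather than $O(\ell^2)$, and similarly \textsc{IsEnvyFree} needs the welfares, which require summing path lengths. The paper's bound silently absorbs or ignores this $T_{\max}$ factor in the per-node work; your hashing/occupancy remark is a reasonable way to justify the $\ell^2$ term if one is willing to amortize, but strictly speaking a $T_{\max}$ factor is being suppressed. This is a looseness in the theorem statement itself rather than in your derivation.
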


\vspace{-0.25 in}
\section{Mechanism for Rational Agents}\label{Sec:mechanism_design}
In this section, we assume that the per-step cost of the agent ($c_i^{step}$) is private and is reported by the agent to the system. All other information is public and known apriori to the system. Algorithm \ref{Algo1:FairICTS} and \ref{Algo:FairCBS} compute a set of fair and non-conflicting solutions ($\mathcal{R}$) and select a plan that maximizes social welfare. However, the agents may be self-interested and misreport their costs if doing so is beneficial to them. We propose a mechanism with the following properties.
\begin{itemize}
\item \textbf{Dominant Strategy Incentive compatible(DSIC).} For each agent, truthfully reporting their private information is the best strategy, regardless of what other agents do.
\item \textbf{Individually Rational(IR).} Each agent is at least as well off participating in the mechanism as they would be by not participating.
\item \textbf{Performance Guarantee.} Given the reported types, the mechanism outputs the solution with maximum social welfare.
\end{itemize}

We begin by defining some preliminary concepts. 

\begin{definition}[Single Parameter Domain Mechanism]\label{Def:SPDM}
Let each agent report her bid $b_i$, claiming it as her true cost. Given reported bids \textbf{b} = $(b_1,b_2,\dots,b_{\ell})$, a single-parameter domain mechanism denoted by $(f,p)$ can be described by the following two parts:
\begin{itemize}
\item \textbf{Allocation function ($f$)} outputs a joint plan from the set of plans (f($b$) $\in \mathcal{R}$).
\item \textbf{Payment function ($p$)} gives payment for each agent ($p_i(b) \in \mathbb{R}$).
\end{itemize}
\end{definition}
\begin{definition}[Winning Set]\label{Def:WS}
Let $\pi_i^* = f(c_1,\dots,c_l)_i$ be agent $i$'s assigned path under true costs, then the winning set for agent i is defined as: $\mathcal{W}_i = \{\Pi \in \mathcal{R} | \pi_{i} = \pi_{i}^* \}$, i.e., the set of all plans where agent i gets the same path as in the optimal solution under true costs.
\end{definition}
\begin{definition}[Monotonicity]\label{Def:Monotone}
Given bids $(b_i,b_{-i})$, the allocation function $f$ is monotone for agent i if the following condition holds true:
$f(b_i,b_{-i}) \in \mathcal{W}_i \implies f(b_i',b_{-i}) \in \mathcal{W}_i$ $\forall~ b_i' \geq b_i$.
\end{definition}
\begin{definition}[Critical Value]\label{Def:Critical_Value}
The critical value for agent i for a monotone allocation function $f$ is defined as:   $r_i(b_{-i}) = sup\{b_i | f(b_i,b_{-i}) \notin \mathcal{W}_i\}$. Thus, an agent wins by bidding greater than its critical value. Hence, the critical value can be computed as: $r_i(b_{-i}) = inf\{ \forall \epsilon>0, f(b_i,b_{-i})_i = f(b_i+\epsilon,b_{-i})_i \}$. For fixed $b_{-i}$, the mechanism computes $r_i(b_{-i})$ as the smallest bid $b_i$ such that the allocation $f(b_i,b_{-i})_i$ differs from that obtained for any smaller bid, which can be found by a one-dimensional search over $b_i$.
\end{definition}

\paragraph{Mechanism Settings}
Each agent reports her per-step cost to the mechanism, and then the mechanism selects a solution from $\mathcal{R}$. We denote the total number of steps in path $\pi_i$ of agent $i$ as $k_i(\pi_i)$. Each agent aims to maximize her utility, $U_i = max\{u_i - c_i \times k_i(\pi_i) - p_i, 0\}$, where $p_i$ is the payment charged from agent i. The goal of the mechanism is to maximize social welfare under reported bids, $W(\textbf{b};\Pi) = \Sigma_{i=1}^l (u_i - b_i \times k_i(\pi_i))$, where $\textbf{b} = (b_1, b_2, \dots, b_{\ell})$.

\paragraph{Mechanism.}
We propose a single-parameter domain mechanism $(f,p)$ that maximizes social welfare and can be described as:
\begin{itemize}
\item Allocation function, $f: (\mathbb{R}^+)^{\ell} \to \mathcal{R}$ is defined as $f(b_1,b_2,\dots,b_{\ell}) = \argmax_{\Pi \in \mathcal{R}}$ $\sum_{i=1}^{\ell} (u_i - b_i \times k_i(\pi_i))$.
\item Payment function, $p_i: (\mathbb{R^+)^{\ell}} \to \mathbb{R} \cup \{0\}$ is defined as $p_i(\textbf{b}) = r_i(b_{-i}) \times k_i(f(\textbf{b})_i)$. 
\end{itemize} 

\begin{theorem} \label{th3: DSIC_IR}
The above proposed single-parameter domain mechanism $(f,p)$ is DSIC and IR.
\end{theorem}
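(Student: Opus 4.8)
The plan is to invoke the standard characterization of truthful single-parameter mechanisms (Myerson's lemma): a mechanism $(f,p)$ is DSIC if and only if the allocation rule $f$ is monotone (in the sense of Definition~\ref{Def:Monotone}) and the payments are given by the critical-value formula. So the proof splits into two parts: first establish monotonicity of the allocation function $f$, and then verify that the payment $p_i(\mathbf{b}) = r_i(b_{-i}) \cdot k_i(f(\mathbf{b})_i)$ is exactly the critical-value payment, from which both DSIC and IR follow.

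First I would prove monotonicity. Fix $b_{-i}$ and suppose $f(b_i, b_{-i}) \in \mathcal{W}_i$, i.e., under bid $b_i$ agent $i$ receives its optimal-under-true-cost path $\pi_i^*$, which has some fixed step count $k_i^* = k_i(\pi_i^*)$. Now raise the bid to $b_i' \geq b_i$. I want to show agent $i$ still receives a path with the same step count (so the allocation stays in $\mathcal{W}_i$ — here I would note the winning set is really determined by the step count $k_i$, since that is all that enters the welfare objective and the payment). The key observation: let $\Pi = f(b_i, b_{-i})$ be the welfare-maximizing plan at bid $b_i$ and let $\Pi' = f(b_i', b_{-i})$ be the one at $b_i'$. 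By optimality of $\Pi$ at $b_i$ and of $\Pi'$ at $b_i'$, writing the objective as $\sum_j (u_j - b_j k_j(\pi_j))$, a standard exchange argument gives $(b_i' - b_i)(k_i(\pi_i') - k_i(\pi_i)) \le 0$, hence $k_i(\pi_i') \le k_i(\pi_i) = k_i^*$. Since $k_i^*$ is the minimum feasible step count for agent $i$ to reach its goal, we get $k_i(\pi_i') = k_i^*$, so $\Pi' \in \mathcal{W}_i$. This is the step I expect to be the main obstacle — the exchange argument needs care because $\mathcal{R}$ is a discrete set and ties in the $\argmax$ must be handled (e.g.\ by fixing a deterministic tie-breaking rule independent of $b_i$), and one must be careful that "winning" for agent $i$ genuinely means "attains the minimum step count" rather than literally the same vertex sequence.

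Given monotonicity, the critical value $r_i(b_{-i})$ is well-defined (Definition~\ref{Def:Critical_Value}): below it, agent $i$'s step count jumps above $k_i^*$; at or above it, agent $i$ gets $k_i^*$. Then I would verify DSIC directly: if agent $i$ bids truthfully $b_i = c_i$ and wins, its utility is $u_i - c_i k_i^* - r_i(b_{-i}) k_i^* = (u_i - (c_i + r_i(b_{-i})) k_i^*)$ wait — more precisely $u_i - c_i k_i(f(\mathbf{b})_i) - r_i(b_{-i}) k_i(f(\mathbf{b})_i)$, capped at $0$. A misreport $b_i' > c_i$ either leaves the allocation (hence utility) unchanged, or pushes agent $i$ to a strictly larger step count — but since winning with step count $k_i^*$ requires $c_i \le r_i(b_{-i})$ is \emph{false} in general... here I would follow the usual two-case analysis: report above $c_i$ can only lose a beneficial allocation or keep it; report below $c_i$ can only gain an allocation that was not beneficial (utility would be $\le 0$, hence the $\max\{\cdot,0\}$ makes deviating weakly worse). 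This shows truthful reporting maximizes $U_i$ regardless of $b_{-i}$, giving DSIC.

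Finally, IR follows because payments are charged only to winning agents, and a winning truthful agent pays $r_i(b_{-i}) \cdot k_i^*$ where by definition of the critical value $c_i \le r_i(b_{-i})$ is the condition... I would instead argue: when agent $i$ wins truthfully, $c_i$ lies at or below the threshold, so $u_i - c_i k_i^* \ge u_i - r_i(b_{-i}) k_i^* $ is the wrong direction — rather, winning means the welfare with $i$'s optimal path beats alternatives, and the critical-value payment never exceeds $u_i - c_i k_i^*$ by the standard Myerson bound, so $U_i \ge 0$; a losing agent pays nothing and has $U_i = 0$. Hence no agent is worse off participating, establishing IR. Throughout, the performance guarantee (welfare maximization under reported bids) is immediate from the definition of $f$ as an exact $\argmax$ over $\mathcal{R}$, so I would remark on it only in passing.
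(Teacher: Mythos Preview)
The paper states Theorem~\ref{th3: DSIC_IR} without proof, so there is nothing to compare your proposal against directly. Your high-level plan---establish monotonicity of $f$ via an exchange argument and then invoke the Myerson critical-value characterization for single-parameter mechanisms---is the standard and correct route, and is almost certainly what the authors have in mind given how Definitions~\ref{Def:SPDM}--\ref{Def:Critical_Value} are set up.

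The hesitations you record along the way are, however, genuine difficulties with the paper's formulation rather than flaws in your strategy. Your exchange inequality correctly yields $(b_i'-b_i)\bigl(k_i(\pi_i)-k_i(\pi_i')\bigr)\ge 0$, hence $k_i(\pi_i')\le k_i(\pi_i)$ when $b_i'\ge b_i$; but the further step ``since $k_i^*$ is the minimum feasible step count, equality holds'' is not justified by the definitions: $\pi_i^*$ is the allocation at the \emph{true} cost vector (Definition~\ref{Def:WS}), not the minimum-step plan in $\mathcal{R}$, so overbidding could in principle move agent $i$ to a strictly shorter path, which lies \emph{outside} $\mathcal{W}_i$ as literally written (path equality). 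Your proposed repair---reinterpret ``winning'' as ``same step count'' and fix a deterministic tie-break---is the right way to make the monotonicity claim go through. Similarly, the back-and-forth in your IR computation reflects a real ambiguity in the payment rule: with $p_i = r_i(b_{-i})\cdot k_i(f(\mathbf{b})_i)$, a truthful winner's net utility is $u_i-\bigl(c_i+r_i(b_{-i})\bigr)k_i^*$, and non-negativity does not follow from anything stated. A clean proof would need either a different sign convention on $p_i$ or an explicit assumption that $r_i(b_{-i})\le u_i/k_i^* - c_i$ for participating agents. In short, your skeleton is sound; the gaps you flag are gaps in the paper, and completing the proof requires tightening those definitions first.
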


\section{Experimental Evaluation}\label{Sec:Exp_Evualuation}
This section describes the experimental setup. Initially, we discuss the datasets used in our experiments.

\subsection{Dataset Description}
We conduct our experiments on four 2D grid maps selected from commonly used MAPF benchmarks by Stern et al. \cite{stern2019multi}. These maps vary in size and structure. Empty maps contain no obstacles, allowing free movement across all cells, whereas other maps include static obstacles that restrict paths. The complete list of datasets is provided in Table \ref{tab:maps}.


\begin{table}[h!]
\centering
\scriptsize
\caption{Benchmark MAPF Datasets Used in Experiments}
\label{tab:maps}
\begin{tabular}{l c}
\hline
\textbf{Map} & \textbf{Size} \\
\hline
empty-16-16 & $16 \times 16$ \\
random-32-32-20 & $32 \times 32$ \\
empty-48-48 & $48 \times 48$ \\
den312d & $81 \times 65$ \\
\hline
\end{tabular}
\end{table}
\vspace{-0.2 in}
\subsection{Key Parameters}
Each experiment was executed 100 times with different randomly selected start and goal positions for all agents. For agent \( a_i \), the utility and the cost of the step were sampled as \emph{$ u_i \sim \text{Uniform}(0.001, 1) $} and \emph{$c_i^{\text{step}} \sim \text{Uniform}\big(10^{-6}, \max(10^{-6}, \frac{u_i}{\text{dist}_i})\big)$}, where \(\text{dist}_i\) is the optimal path length of the agent. For experimental evaluation, each algorithm was executed with a time limit of 60 seconds per instance. The \textit{success fraction} is defined as the ratio of successful runs (i.e., instances where a solution was found within the time limit) to the total number of runs. The average runtime over 100 runs is reported for each setting.

\subsection{Goals of our Experiments}
In this study, we want to address the following Research Questions (RQ).
\begin{itemize}
\item \textbf{RQ1}: How do the Fair-ICTS and Fair-CBS algorithms perform under different fairness notions?
    
\item \textbf{RQ2}: How does the scalability of both algorithms vary with respect to the number of agents, in terms of success fraction and runtime performance?

\item \textbf{RQ3}: How does varying the $\epsilon$ parameter influence the trade-off between fairness satisfaction, success rate, and runtime for both algorithms?
\end{itemize}
\subsection{Experimental Results}




\begin{figure*}[!ht]
\centering
\begin{tabular}{ccc}

\includegraphics[scale=0.172]{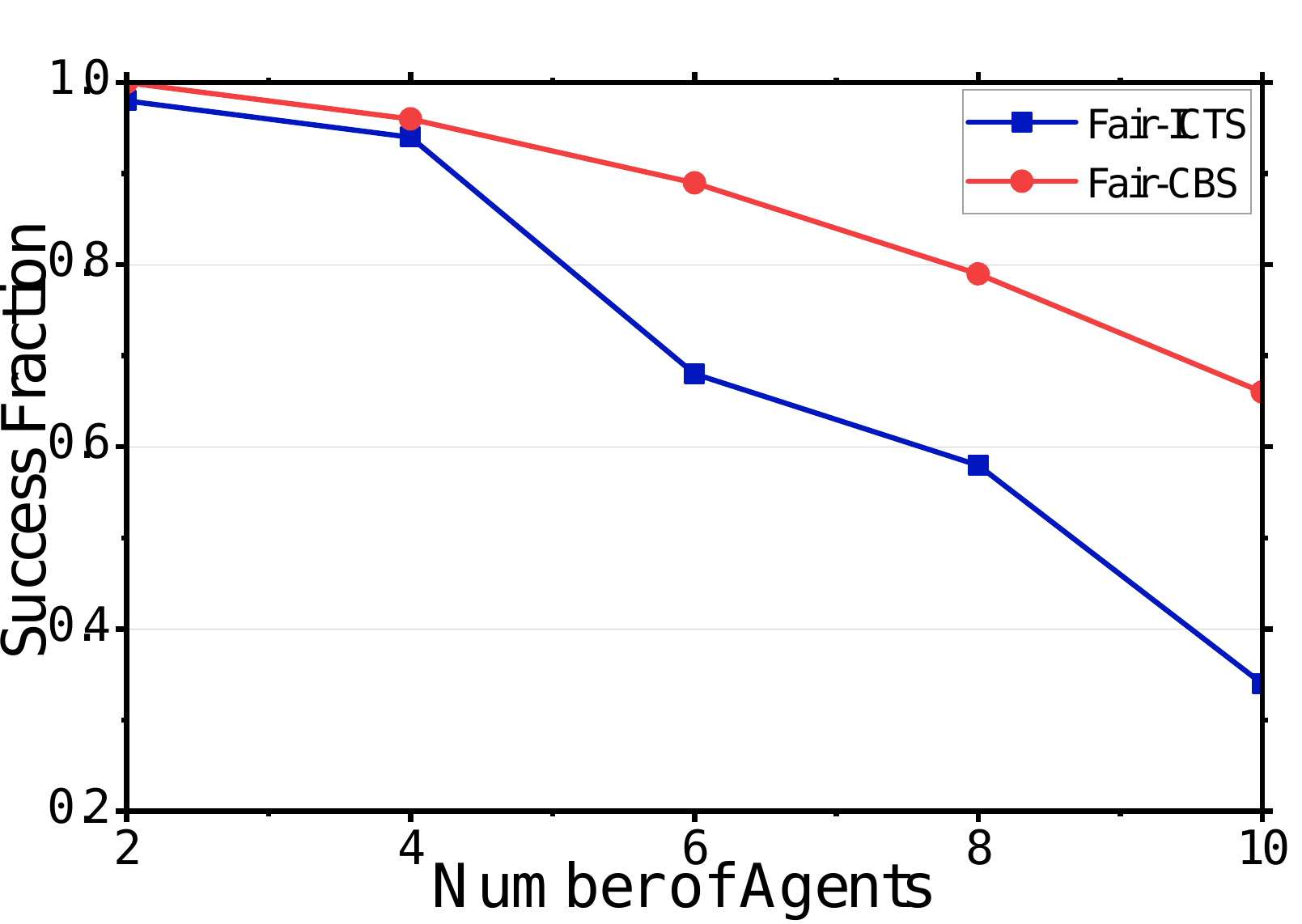} &
\includegraphics[scale=0.172]{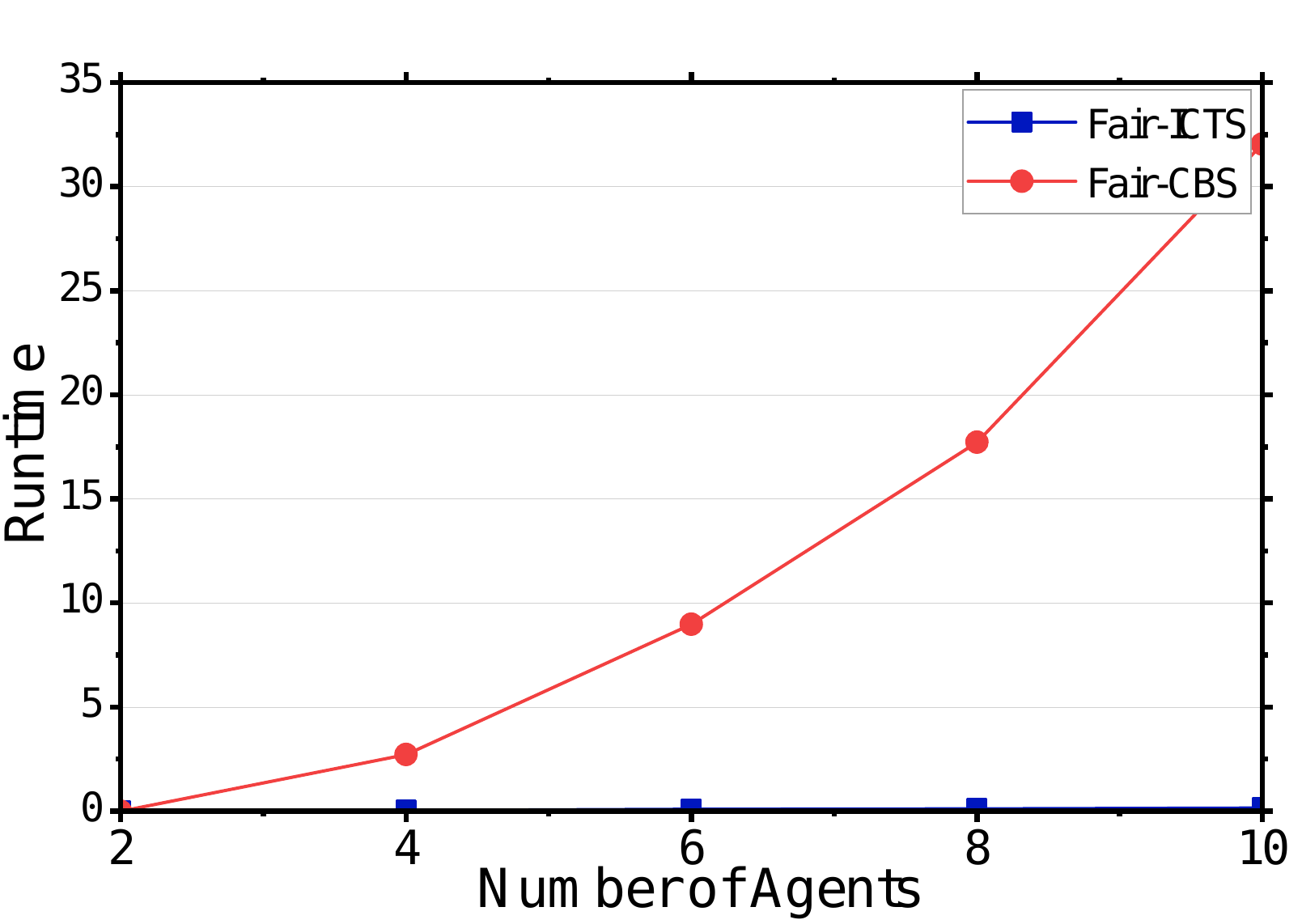} &
\includegraphics[scale=0.172]{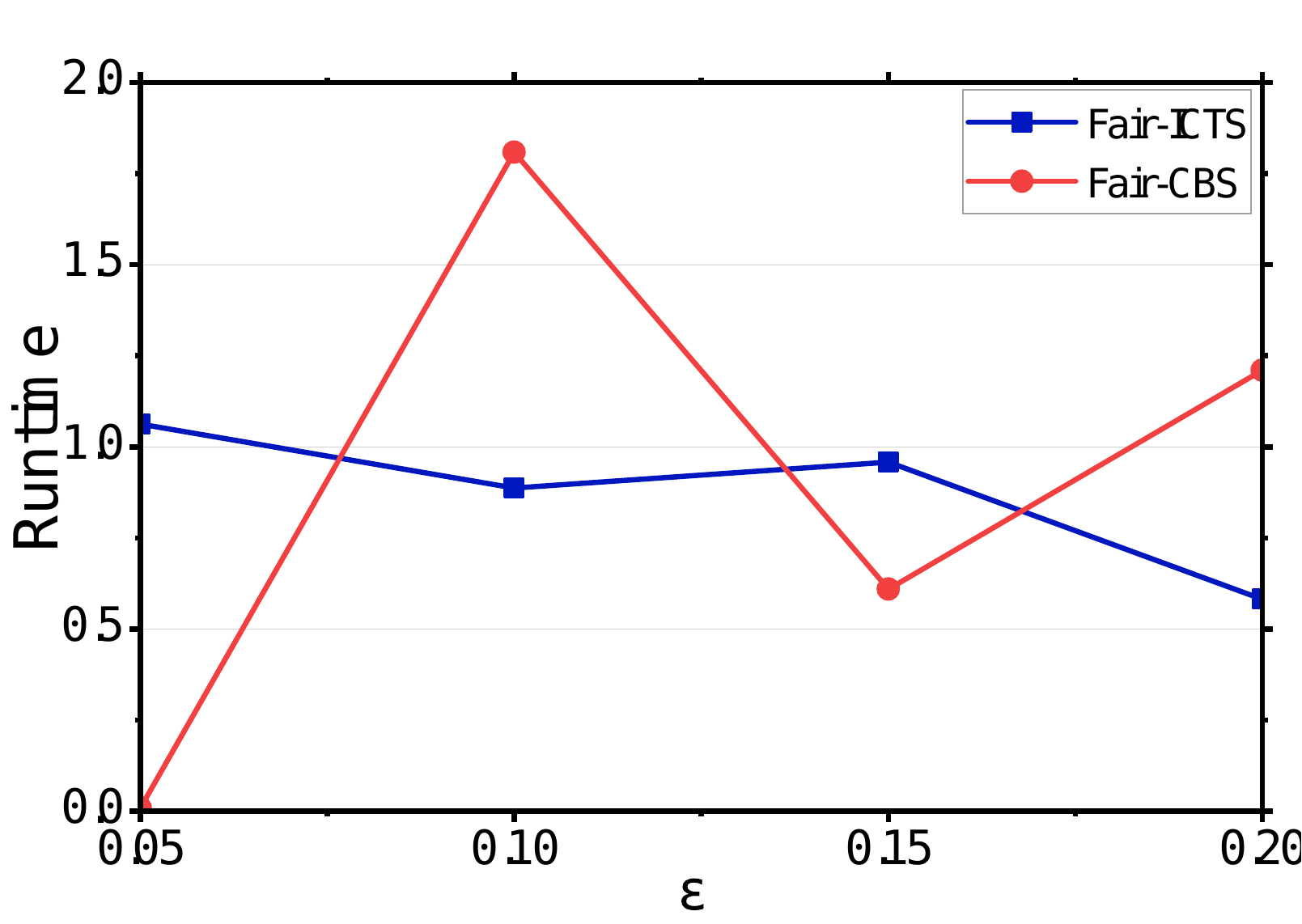} \\
\tiny{(a)} & \tiny{(b)} & \tiny{(c)} \\[4pt]

\includegraphics[scale=0.172]{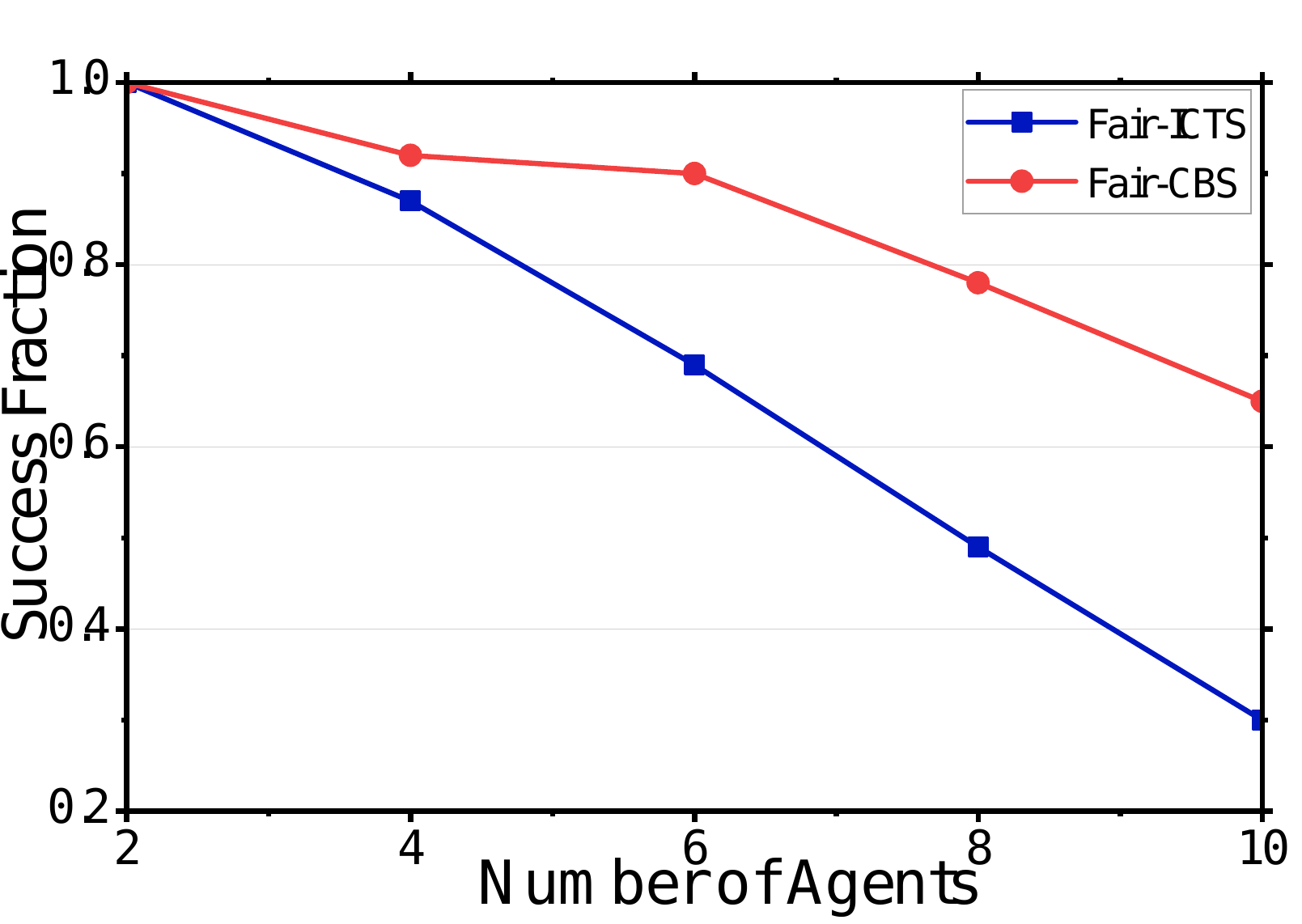} &
\includegraphics[scale=0.172]{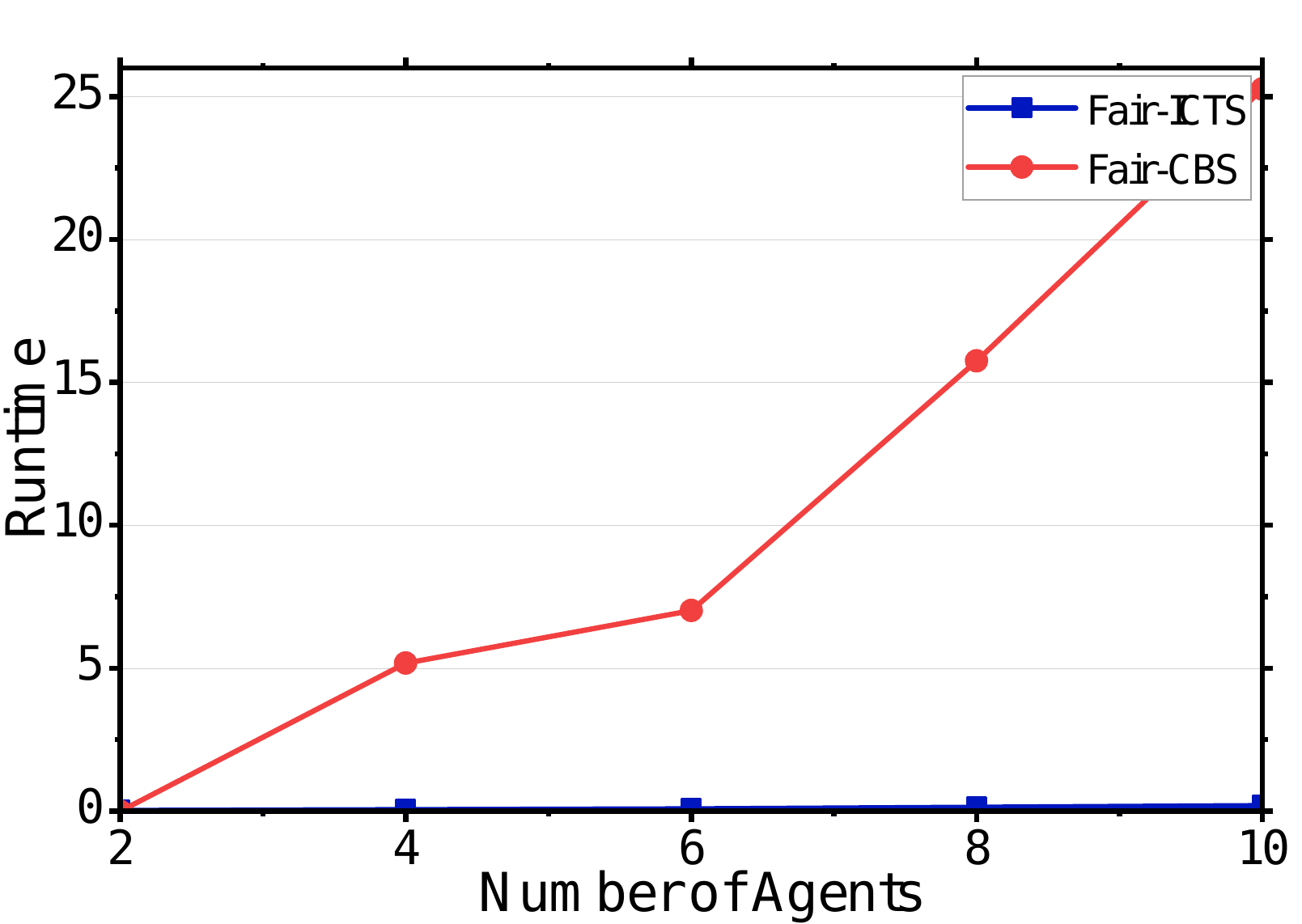} &
\includegraphics[scale=0.172]{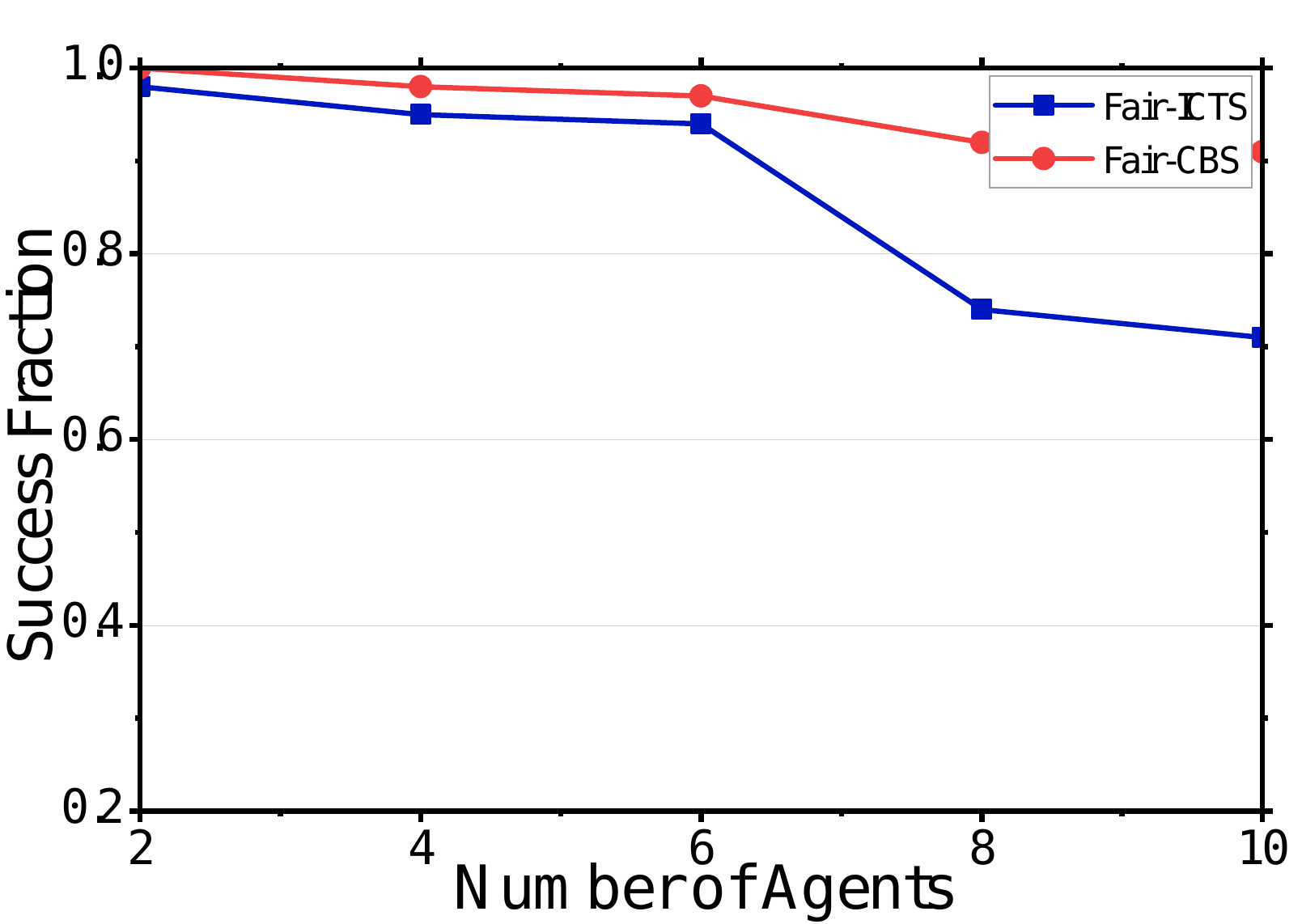} \\
\tiny{(d)} & \tiny{(e)} & \tiny{(f)} \\[4pt]

\includegraphics[scale=0.172]{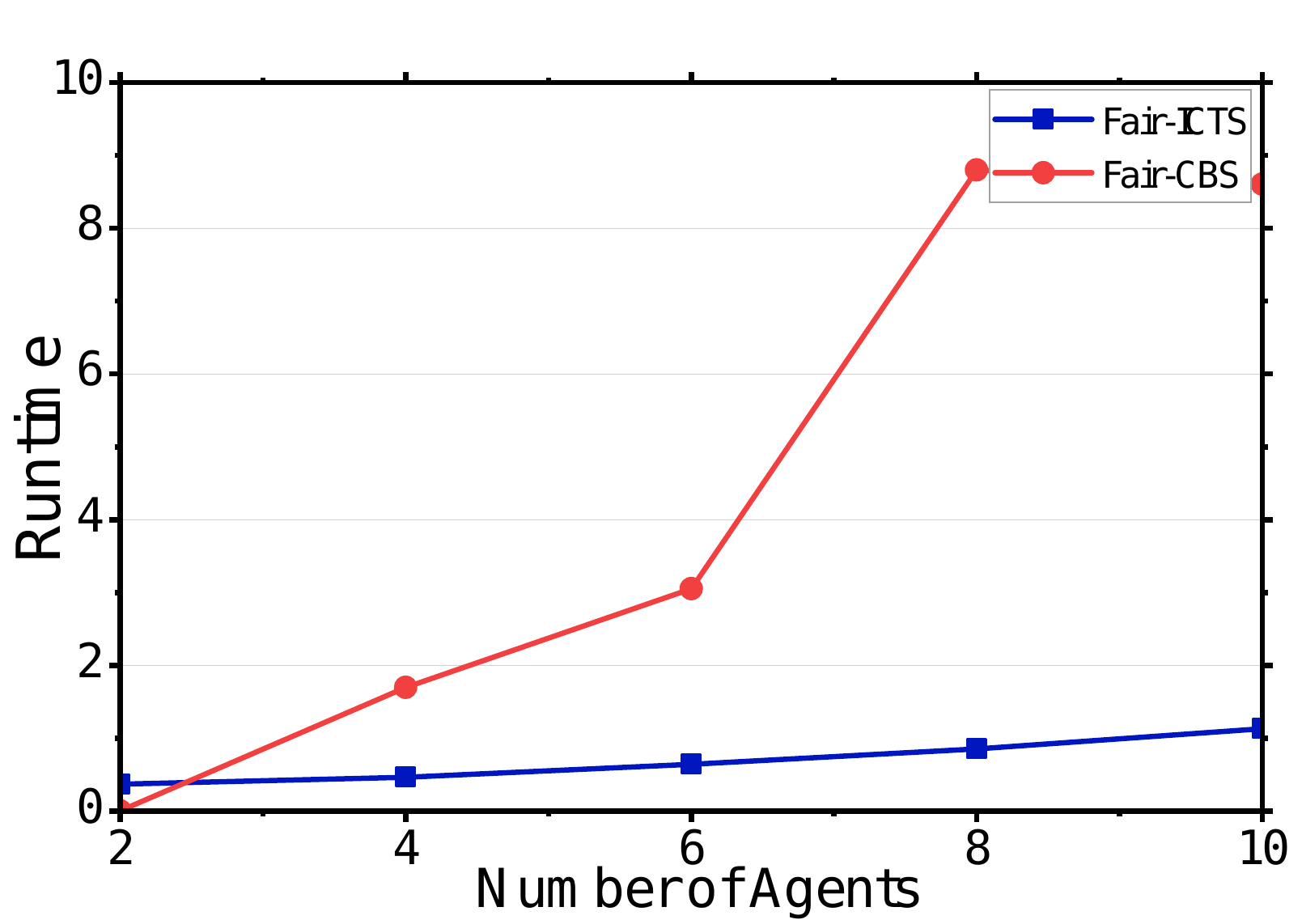} &
\includegraphics[scale=0.172]{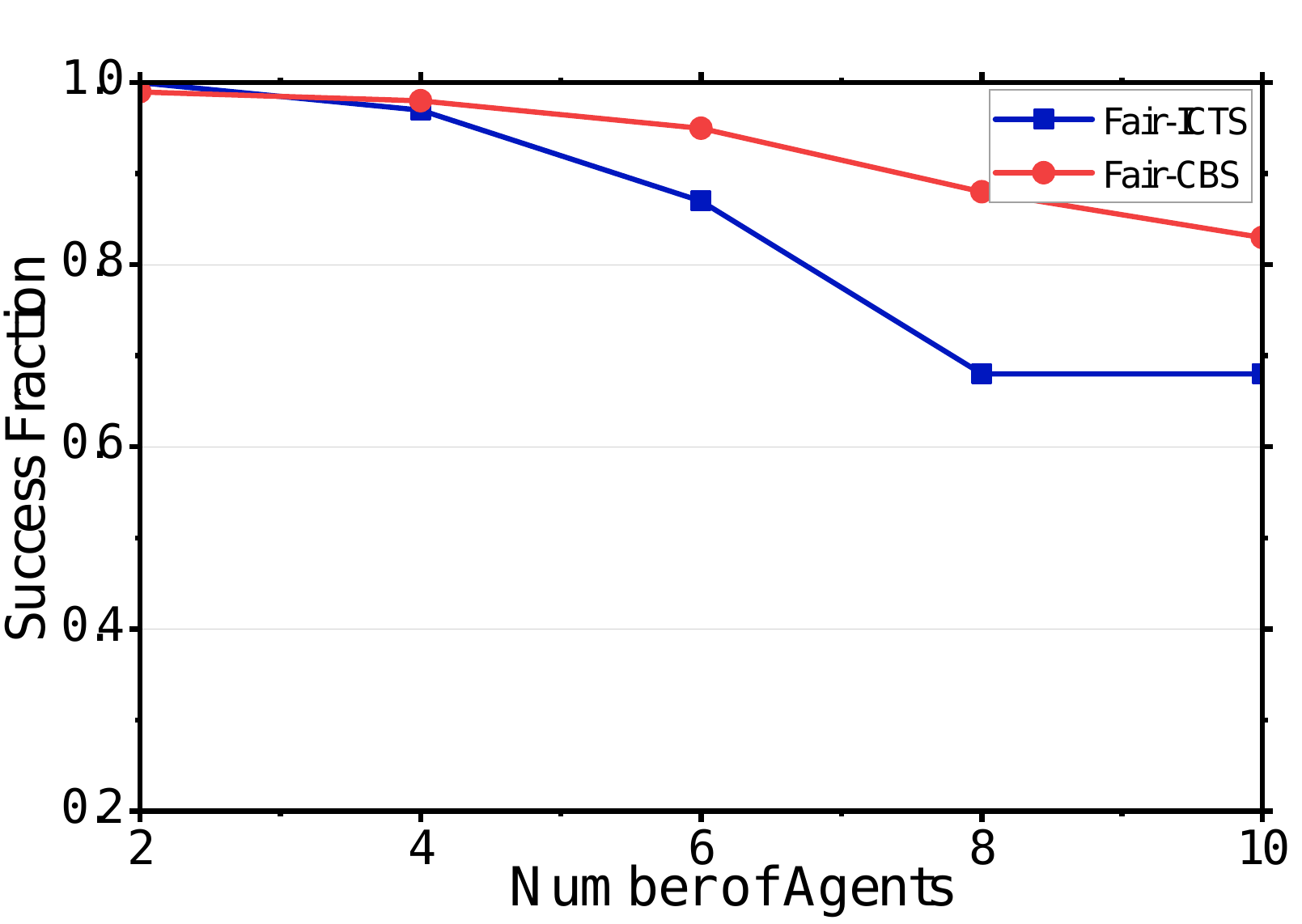} &
\includegraphics[scale=0.172]{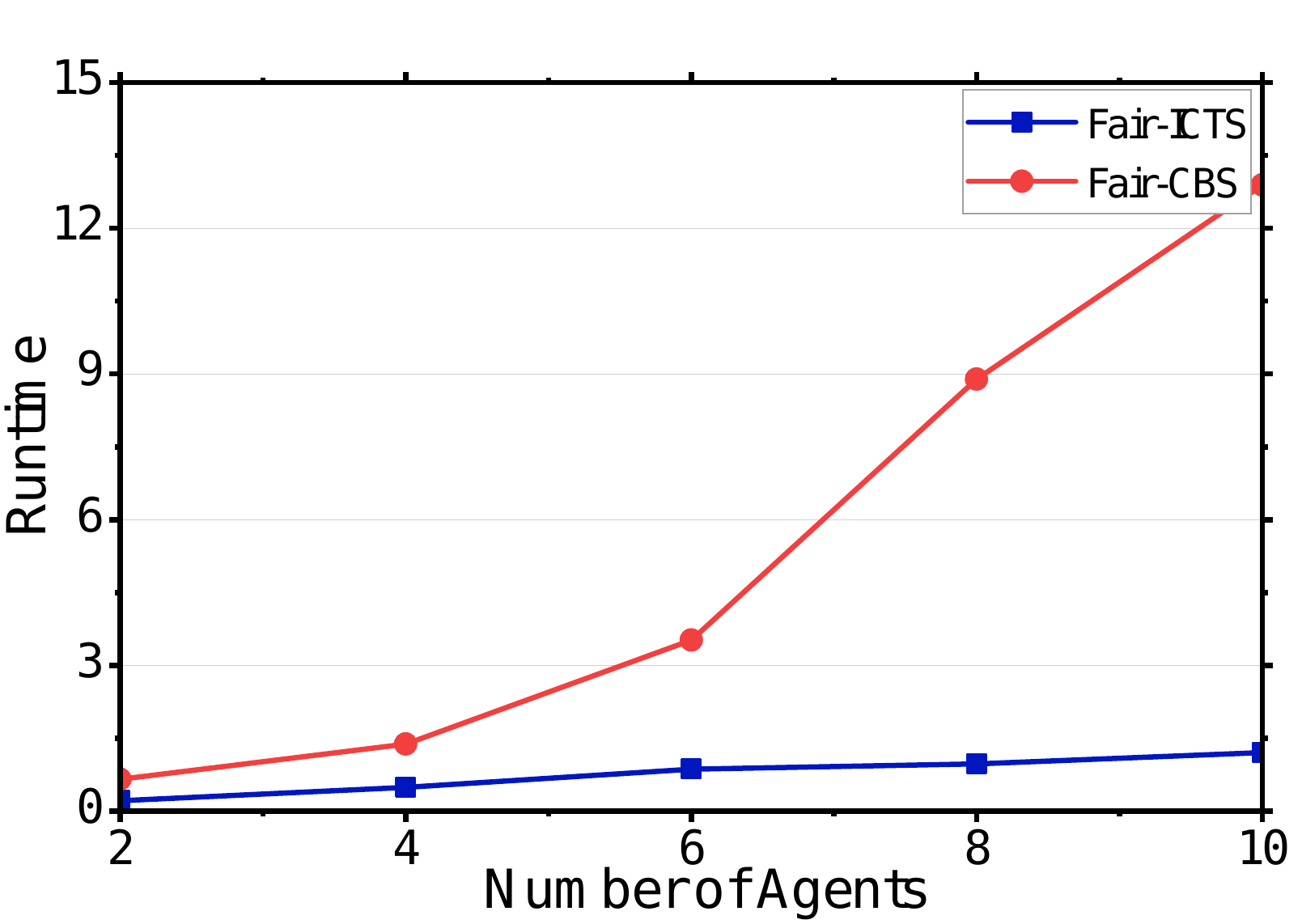} \\
\tiny{(g)} & \tiny{(h)} & \tiny{(i)} \\[4pt]

\includegraphics[scale=0.172]{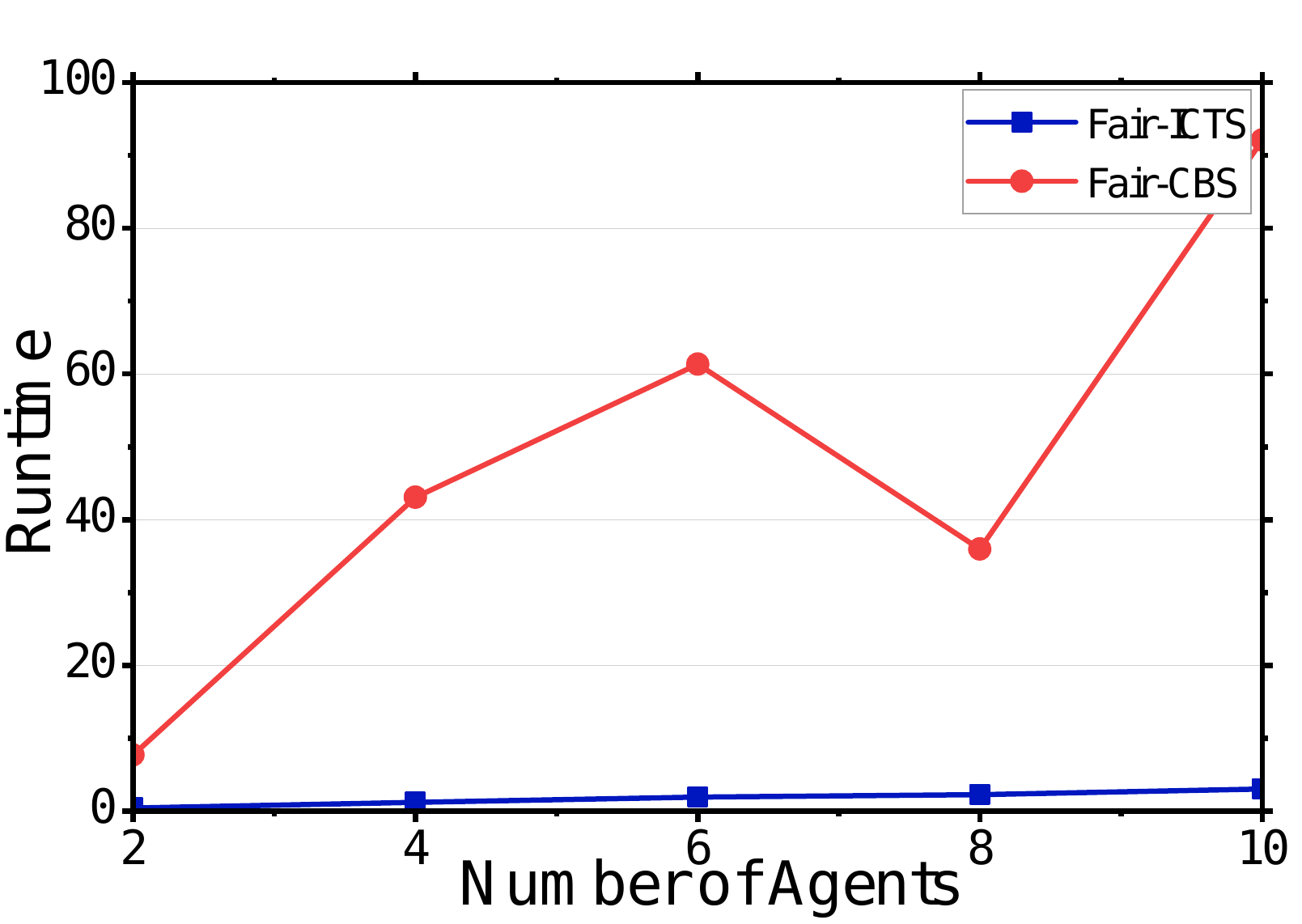} &
\includegraphics[scale=0.172]{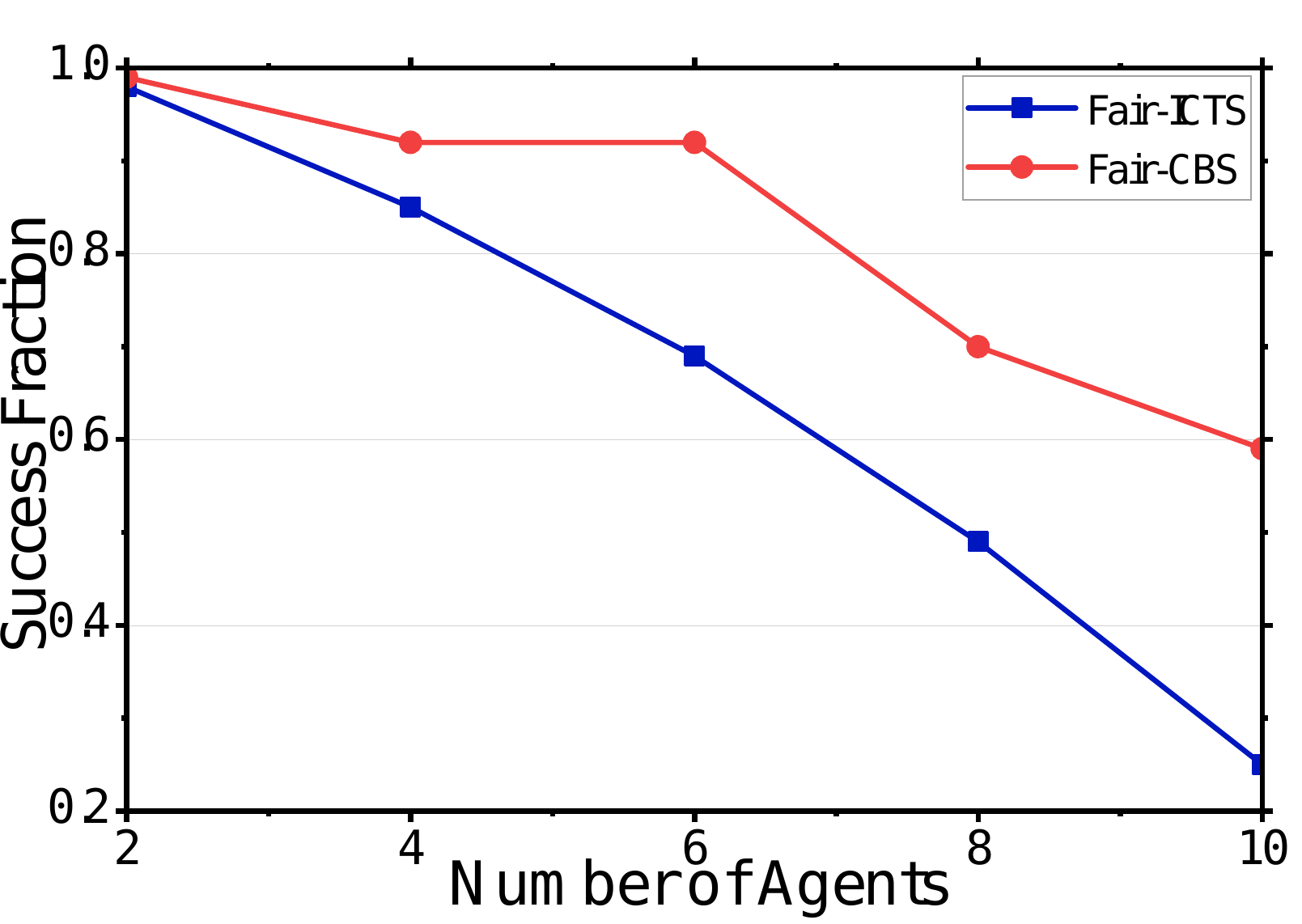} &
\includegraphics[scale=0.172]{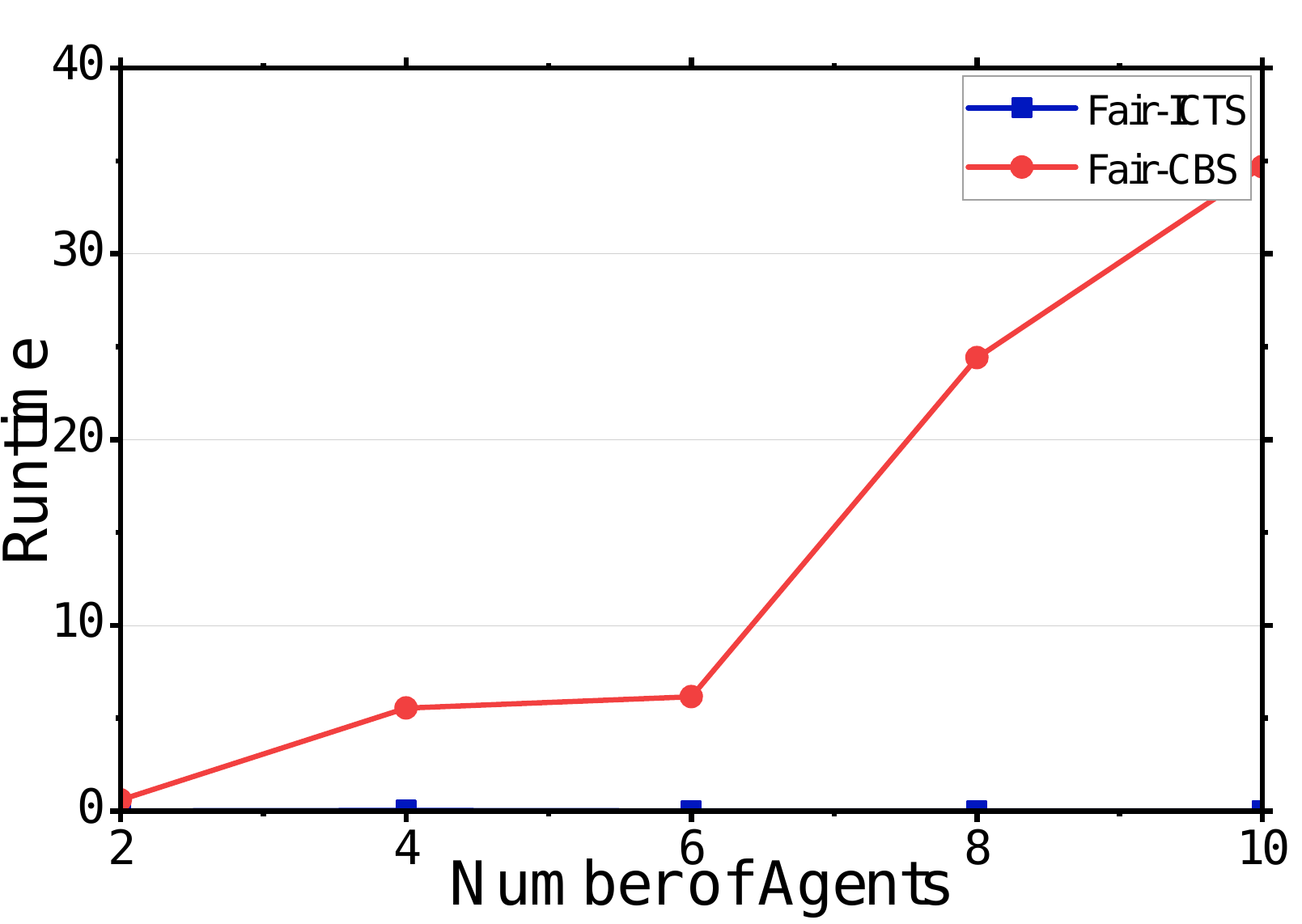} \\
\tiny{(j)} & \tiny{(k)} & \tiny{(l)} \\[4pt]

\end{tabular}

\caption{
\scriptsize
Experimental results for the proposed algorithms on four benchmark MAPF environments.  
(a)--(e) correspond to the \texttt{random32-32-20} map, 
(f)--(i) to \texttt{empty48-48}, 
(j) to \texttt{den312d}, 
and (k)--(l) to \texttt{empty16-16}.  
Specifically: 
(a) proportional fairness (success fraction), 
(b) proportional fairness (runtime), 
(c) $\epsilon$-envy fairness (runtime), 
(d) max--min fairness (success fraction), 
(e) max--min fairness (runtime) on \texttt{random32-32-20}; 
(f) proportional fairness (success fraction), 
(g) proportional fairness (runtime), 
(h) max--min fairness (success fraction), 
(i) max--min fairness (runtime) on \texttt{empty48-48}; 
(j) proportional fairness (runtime) on \texttt{den312d}; 
(k) max--min fairness (success fraction), 
and (l) max--min fairness (runtime) on \texttt{empty16-16}.  
}

\label{fig:exp_results}
\end{figure*}

\paragraph{\textbf{Number of Agents vs. Success Rate.}} Figures \ref{fig:exp_results} (a), (d), (f), (h), (k) show the variation in the success fraction with an increasing number of agents for different maps. Across all environments, the success fraction decreases as the number of agents grows, indicating a higher probability of conflicts and path congestion in denser settings. Interestingly, the success fraction is generally higher for larger maps than for smaller ones. This trend arises because larger environments provide greater spatial flexibility. Conversely, in smaller grids, the limited free space and higher agent density lead to more frequent collisions. 


\paragraph{\textbf{Efficiency Test.}} Figures \ref{fig:exp_results}(b), (e), (g), (i), (j), and ($\ell$) show runtime performance as the number of agents increases. Fair-ICTS consistently exhibits low runtime across all maps, remaining below $0.04$ s on small maps, under $1.2$s on medium maps, and within $0.04$-$3$s on dense maps. In contrast, Fair-CBS incurs significantly higher runtime, ranging from $0.6$-$34.6$s on \textit{empty16-16}, up to $32$s on \textit{random-32-32-20}, $12$s on \textit{empty48-48}, and $7$-$92$s on \textit{den312d}. Fair-ICTS achieves faster runtime due to its compact search structure, Fair-CBS incurs higher computational overhead because of its explicit constraint expansion. 

\paragraph{\textbf{Scalability Test.}}
The overall trends indicate that both Fair-ICTS and Fair-CBS scale reliably up to moderate agent densities. On larger maps, the algorithms continue to maintain stable run times despite the increased combinatorial complexity, demonstrating robustness to growing spatial dimensions. However, as the number of agents becomes very large, the computational overhead would increase substantially because the search must handle a significantly expanded set of potential conflicts while simultaneously enforcing fairness constraints. 

\paragraph{\textbf{Varying $\epsilon$.}}
Figure \ref{fig:exp_results}(c) presents results for varying fairness tolerance $\epsilon$. A smaller $\epsilon$ imposes stricter fairness conditions, as $\epsilon$ increases, the algorithm finds feasible solutions more easily, resulting in higher success rates. The runtime trends further highlight this behavior. For Fair-ICTS, the runtime remains relatively stable below $1.0$s. Fair-CBS exhibits higher sensitivity reflecting the higher branching factor and constraint-handling cost. 


\section{Concluding Remarks}\label{sec:conclusion}
In this paper, we study fairness-driven MAPF problem, where the goal is to find non-conflicting and fair paths for all agents that maximize social welfare. We have considered three fairness notions, namely $\epsilon$-envy freeness, max-min fairness, and proportional fairness. We have proposed solutions for our problem, considering the agents are non-rational, and also developed a dominant strategy incentive compatible mechanism for rational agents. Our future study on this problem will concentrate on considering different practical constraints, such as online setup where the agents can join and leave at any point in time, considering uncertain environments, and so on. 

\bibliographystyle{apalike}
{\small
\bibliography{example}}
\end{document}